\documentclass[10pt,journal,doublecolumn]{IEEEtran}

\usepackage{threeparttable,placeins,makecell,stfloats,cite}
\usepackage{threeparttable,arydshln}
\usepackage{amsmath,amssymb,bm}
\usepackage{mathtools}
\usepackage{graphicx}
\usepackage[usenames]{color}
\usepackage{amsfonts}
\usepackage{latexsym}
\usepackage{multirow}
\usepackage{caption}
\usepackage{rotating}
\usepackage{float}
\usepackage{url}
\usepackage{cite}
\usepackage[noend]{algpseudocode}
\usepackage{algorithmicx,algorithm}
\usepackage[noend]{algpseudocode}
\usepackage{diagbox}
\usepackage{cuted}
\algrenewcommand\algorithmicthen{}
\algrenewcommand\algorithmicdo{}

\newtheorem{lemma}{\textit{Lemma}}

\newtheorem{proposition}{\textit{Proposition}}

\usepackage{graphicx} 

\usepackage[hidelinks]{hyperref}
\usepackage[noabbrev,nameinlink]{cleveref}

\usepackage{overpic} 
\usepackage{subfigure}

\IEEEoverridecommandlockouts

\begin{document}

\title{Enabling Flexible AFDM-ISAC Design:\\ When Ambiguity
Shaping Meets PAPR Control}
\author{Lingsheng~Meng, Yong Liang~Guan, \textit{Senior Member, IEEE}, Zilong~Liu, \textit{Senior Member, IEEE}, \\Yirui~Luo, \textit{Graduate Student Member, IEEE}, and Pingzhi Fan, \textit{Life Fellow, IEEE}\vspace{-1cm}

\thanks{Lingsheng Meng, Yong Liang Guan, and Yirui~Luo are with the School of Electrical and Electronics Engineering, Nanyang Technological University, Singapore 639798 (e-mail: meng0071@e.ntu.edu.sg; eylguan@ntu.edu.sg; yirui001@e.ntu.edu.sg). Zilong Liu is with the School of Computer Science and Electronics Engineering, University of Essex, Colchester CO4 3SQ, UK (e-mail: zilong.liu@essex.ac.uk). Pingzhi Fan is with the Key Laboratory of Information Coding and Transmission of Sichuan Province, CSNMT International Cooperation Research Centre (MoST), Southwest Jiaotong University, Chengdu 610031, China (e-mail: p.fan@ieee.org).}
}
\maketitle

\begin{abstract}
Affine frequency division multiplexing (AFDM) has emerged as an enabling waveform for integrated sensing and communication (ISAC) due to its intrinsic chirp signaling nature.
Nevertheless, the practicality of AFDM-ISAC systems needs to address two major technical challenges, i.e., high ambiguity function (AF) sidelobes and high peak-to-average power ratio (PAPR).
\textcolor{black}{By exploiting the reserved chirp-subcarrier (RCS) symbols and per-subcarrier pre-chirp parameters, we develop a flexible and unified AFDM waveform design framework for AF shaping and PAPR control.}
The proposed framework supports three tunable modes: AF shaping via weighted integrated sidelobe level (ISL) minimization, PAPR minimization, and joint AF shaping and PAPR control under a prescribed PAPR constraint.
To solve the formulated nonconvex problem and to accommodate the discrete-phase constraints on the pre-chirp parameters, a joint ISL-PAPR discrete-phase majorization-minimization (JIPD-MM) algorithm is developed.
\textcolor{black}{Simulation results verify the effectiveness of the proposed framework under all the three design modes, with the benchmark comparisons conducted at similar effective spectral efficiencies.
It is shown that the resulting weighted-ISL and PAPR gains lead to improved weak-target detectability in multi-target scenarios and lower bit error rate (BER) under power-amplifier (PA) nonlinearity.}
\end{abstract}

\begin{IEEEkeywords} 
	Affine frequency division multiplexing (AFDM), ambiguity function, integrated sensing and communication (ISAC), peak-to-average power ratio (PAPR), waveform design. 
\end{IEEEkeywords}

\section{Introduction}
Integrated sensing and communication (ISAC), as a major use case for the upcoming sixth-generation (6G) systems \cite{Liu2020JRCDesign,GonzalezPrelcic2024ISACRevolution,Wei2023ISACSignalsSurvey}, has attracted tremendous attention from academia, industry, the International Telecommunication Union (ITU) and the 3rd Generation Partnership Project (3GPP)~ \cite{ITUR2023IMT2030FrameworkDraft, 3GPPref}. By allowing both the sensing and communication modules to share the same hardware and spectrum resources as well as a majority of signal processing tasks, ISAC not only supports flexible performance trade-offs and efficient resource utilization but can also mutually enhance the two functionalities \cite{Liu2022ISACJSAC}.

To support emerging ISAC applications in a diverse set of use cases and quality-of-service requirements, such as vehicle-to-everything (V2X) and low-altitude economy, a central question is \textit{how to design highly flexible and adaptable waveforms that are resilient to doubly selective channels while permitting accurate sensing}. The legacy orthogonal frequency division multiplexing (OFDM) has been extensively studied for ISAC, but it is sensitive to Doppler-induced inter-carrier interference (ICI) in high-mobility environments, thus suffering from degraded performance for both communication and sensing \cite{Wang2006DopplerOFDM}.

Recently, affine frequency division multiplexing (AFDM) has emerged as a promising waveform for high-mobility communications and ISAC \cite{AFDMOR,Luo2024AFDMSCMA,Sui2025MultiFunctionalChirp,Luo2025AFDMISACADD,Tek2025SecureAFDM}. In principle, AFDM modulates its data symbols through a set of orthogonal chirp subcarriers that are induced by the discrete affine Fourier transform (DAFT). By properly tuning the chirp rate specified by the post-chirp parameter $c_1$ according to the maximum Doppler frequency, AFDM is able to attain full channel diversity and hence significantly enhanced error rate performance. Moreover, AFDM subsumes both OFDM and orthogonal chirp division multiplexing (OCDM) \cite{Ouyang2016OCDM} and can be efficiently generated through minimum modification of an OFDM transmitter. As such, AFDM is backward-compatible and allows seamless integration into the existing wireless infrastructure that is dominated by OFDM \cite{AFDMOR,Rou2026AFDMEvolvingOFDM}. 

From the ISAC perspective, AFDM has attracted increasing attention. Since sensing is often realized through matched filtering (MF) in the time or DAFT domain, the ambiguity function (AF) of the transmitted waveform plays a central role~\cite{Ye2022,Liu2025SensingCommSignals}. Existing AFDM-ISAC designs mainly focus on pilot-aided sensing structures. For example, a single dedicated pilot framework was proposed in~\cite{AFDM_isac}, where pilot--data separability is ensured by allocating surrounding null-guard subcarriers, while superimposed-pilot strategies were considered in~\cite{Zheng2025AFDMSuperimposedPilots,Zhang2025AFDMEnabledISACJSAC} to improve the spectral efficiency. Another research line is to characterize the AF properties of AFDM signals with typical pilot structures~\cite{YHR_AFDM_AF_Analysis}. The AF behavior of data-bearing AFDM signals under different chirp-rate configurations has also been investigated~\cite{Bedeer2025AFDMAmbiguity}. Meanwhile, some studies also explored sensing using the data-bearing AFDM signal itself~\cite{Ni2025AFDMISAC,AFDM_isac}. In this case, however, the sensing performance may be compromised due to the random data. 

Beyond the ISAC perspective, AFDM in practical systems also faces a critical transmitter-side challenge, namely high peak-to-average power ratio (PAPR)~\cite{Bazzi2023TunablePAPR,Wang2026DRIP}. Similar to OFDM, this issue arises from the multicarrier nature of AFDM, and a high PAPR could drive the power amplifier (PA) to operate in the nonlinear region, resulting in nonlinear distortion~\cite{LopezPerez2022RANEE}. For OFDM, a number of PAPR reduction techniques have been developed, including selective mapping, precoding, and reserved-tone based schemes \cite{Gopi2020OptimizedSLM,Feng2025,OFDMPAPR}. In contrast, PAPR reduction for AFDM remains much less explored. A DAFT-spread-based affine frequency division multiple access scheme was proposed in~\cite{11185309}, but it is primarily an uplink multiple-access architecture rather than a waveform design framework for AFDM systems. The grouped pre-chirp selection (GPS) method in~\cite{AFDM_PAPR} reduces the PAPR of AFDM systems by selecting different DAFT-domain pre-chirp parameters $c_2$ across subcarriers, but the resulting pre-chirp parameters need to be transmitted to the receiver.

In summary, existing AFDM-ISAC studies mainly focus on pilot-aided structures and related analyses, with little consideration on data-bearing waveform design methods. The current PAPR reduction methods for AFDM either rely on a prescribed spreading structure and thus cannot flexibly optimize an arbitrarily given AFDM waveform instance, or depend on pre-chirp parameter selection at the cost of additional side-information overhead. \textcolor{black}{An effective and tunable design approach that can flexibly carry out both AF shaping and PAPR control according to different ISAC and transmitter requirements is still missing.}

\textcolor{black}{To fill the aforementioned research gaps, the major novelty of this work lies in a flexible and unified AFDM-ISAC waveform design framework that can be readily tuned according to practical communication and/or sensing needs. Unlike conventional reserved-tone designs that mainly target PAPR reduction~\cite{OFDMPAPR}, the proposed reserved chirp-subcarrier (RCS) approach conceives a more challenging research problem for both AF-sidelobe shaping and PAPR control. In contrast to OFDM, AFDM also allows the pre-chirp parameter to vary across subcarriers, i.e., using per-subcarrier $c_{2,m}$ rather than a single $c_2$. In the proposed framework, the RCS symbols and the per-subcarrier pre-chirp parameters are incorporated into a unified DAFT-domain formulation, enabling flexible and reconfigurable AFDM waveform design according to different system requirements.}

Specifically, the main contributions of this work are summarized as follows.
\begin{itemize}

  \item \textcolor{black}{We present a unified and reconfigurable AFDM waveform optimization framework for AF shaping over a prescribed low-ambiguity zone (LAZ) and PAPR control. Its design flexibility is reflected in both the waveform-design configuration and the performance objective. The framework can optimize the RCS symbols alone or further incorporate the per-subcarrier pre-chirp parameters according to the desired performance and signaling overhead. It also supports three operating modes, including an AF shaping mode for integrated sidelobe level (ISL) suppression, a PAPR minimization mode for oversampled AFDM waveforms, and a joint AF shaping and PAPR control mode that minimizes ISL under a prescribed PAPR constraint, thereby allowing a flexible balance between sensing and communication.}

  \item We develop a joint ISL-PAPR discrete-phase majorization-minimization (JIPD-MM) algorithm to tackle the formulated AFDM waveform optimization problems. The challenge arises from the nonconvex weighted-ISL objective, the non-smooth maximum in the PAPR metric, and the finite-alphabet constraints on the pre-chirp parameters. To handle this, smooth majorization-minimization (MM) surrogates are constructed for the weighted-ISL objective and the PAPR metric. The PAPR constraint is incorporated through a penalty formulation, while the discrete-phase constraints are handled through an initialization stage followed by the main optimization stage.

 \item Simulation results verify the effectiveness of the proposed framework for both sensing and communication. \textcolor{black}{The comparisons explicitly account for the spectral-efficiency loss caused by RCS allocation and, when applicable, the signaling overhead of the optimized per-subcarrier pre-chirp parameters through the effective spectral efficiency. The proposed JIPD-MM algorithm achieves substantial ISL/PAPR reduction in the AF shaping and PAPR minimization modes, respectively. In the joint AF shaping and PAPR control mode, the proposed design can effectively suppress AF sidelobes while maintaining the waveform PAPR below a prescribed target. These waveform-level improvements are also reflected in the system-level results, including improved weak-target detectability and bit error rate (BER) advantage under PA nonlinearity.}
\end{itemize}

\textcolor{black}{\textit{Notations:} The operators $(\cdot)^{\mathsf{T}}$, $(\cdot)^{*}$, and $(\cdot)^{\mathsf{H}}$ denote transpose, complex conjugate, and conjugate transpose, respectively. We use $|\cdot|$ for the modulus of a complex scalar and $\|\cdot\|$ for the Euclidean norm. The sets of complex numbers and integers are denoted by $\mathbb{C}$ and $\mathbb{Z}$, respectively. We use $\mathbf{I}_N$ for the $N\times N$ identity matrix and $\mathbb{E}\{\cdot\}$ for expectation. The operators $\Re\{\cdot\}$ and $\Im\{\cdot\}$ extract the real and imaginary parts, and the imaginary unit is denoted by $j=\sqrt{-1}$. The notation $\operatorname{diag}(\bm\rho)$ denotes the diagonal matrix whose principal diagonal is $\bm\rho$, and $\operatorname{vec}(\cdot)$ denotes the stacking vectorization of a matrix.}

\vspace{-0.2cm}

\section{Preliminaries}
\label{Preliminaries}

In this work, we consider a typical downlink monostatic ISAC system in which a base station (BS) serves user equipment (UE) and simultaneously senses surrounding targets using echoes of the AFDM waveform, as illustrated in the right-hand side of Fig.~\ref{Fig2}. The ISAC-BS is assumed to operate in full duplex, with sufficient transmit--receive isolation and self-interference cancellation such that the residual self-interference can be neglected~\cite{Liu2025SensingCommSignals,OFDMPAPR,Zhang2025AFDMEnabledISACJSAC}. \textcolor{black}{For sensing, the received signal is modeled as a superposition of delayed and Doppler-shifted echoes of the transmitted AFDM waveform, weighted by the target complex reflection coefficients and corrupted by additive white Gaussian noise (AWGN). For communication, both the AWGN channel and the doubly selective channel are considered, where the latter consists of multiple propagation paths characterized by complex path gains, delays, and Doppler shifts.}

Let $N$ denote the number of AFDM subcarriers (chirps) within one AFDM symbol of duration $T$. The DAFT-domain transmit vector is given by $\mathbf{x} = [x[0],x[1],\ldots,x[N-1]]^{\mathsf{T}} \in \mathbb{C}^{N}$, where $x[m]$ is the complex symbol placed on the $m$-th AFDM subcarrier. In conventional AFDM, the DAFT parameters $(c_1,c_2)$ are common to all subcarriers. In this work, we keep a common post-chirp parameter $c_1$ (also referred to as the chirping rate), but allow the per-subcarrier pre-chirp parameters (also referred to as the per-subcarrier initial-phase parameters), to vary with the subcarrier index $m$~\cite{AFDM_PAPR}, which we denote by $\{c_{2,m}\}_{m=0}^{N-1}$. Standard AFDM corresponds to the special case $c_{2,m} \equiv c_2$. Then, the transmit signal $\mathbf{s} = [s[0], s[1], \ldots, s[N-1]]^{\mathsf{T}}$ is generated via the inverse DAFT~\cite{AFDMOR}
\begin{equation}
 s[n]
 = \frac{1}{\sqrt{N}} \sum_{m=0}^{N-1}
 x[m]\,
 \exp\!\left( j 2\pi ( c_1 n^2 + \tfrac{m}{N} n + c_{2,m} m^2 ) \right).
 \label{eq:AFDM_IDAFT_disc}
\end{equation}
\textcolor{black}{The expression in \eqref{eq:AFDM_IDAFT_disc} can be rewritten in matrix form as}
\begin{equation}
 \mathbf{s} = \mathbf{A}\mathbf{x},
 \label{eq:AFDM_IDAFT_matrix}
\end{equation}
where $\mathbf{A}
 =
 \boldsymbol{\Lambda}_{c_1}\,\mathbf{F}_{N}^{\mathsf{H}}\,\boldsymbol{\Lambda}_{c_2}$, $[\mathbf{F}_{N}]_{n,m} = \frac{1}{\sqrt{N}} e^{-j2\pi nm/N}$,
$\boldsymbol{\Lambda}_{c_1}=\operatorname{diag}\!\big(e^{j2\pi c_1 0^2},e^{j2\pi c_1 1^2},\ldots,e^{j2\pi c_1 (N-1)^2}\big)$,
and $\boldsymbol{\Lambda}_{c_2}=\operatorname{diag}\!\big(e^{j2\pi c_{2,0} 0^2},e^{j2\pi c_{2,1} 1^2},\ldots,e^{j2\pi c_{2,N-1} (N-1)^2}\big)$.

In AFDM system, each AFDM symbol is appended with a chirp-periodic prefix (CPP)~\cite{AFDM_isac}. The CPP reduces to the conventional cyclic prefix (CP) like that of an OFDM symbol when $2 N c_1$ is an integer and $N$ is even~\cite{AFDMOR}. For simplicity, we restrict our attention to this CP-equivalent case throughout the paper. The CPP/CP duration is chosen to be larger than both the maximum delay of the communication channel and the maximum round-trip delay of the sensing targets.

Then, the AF of the AFDM transmit waveform at a delay--Doppler pair $(\tau,\mu)$ is defined as \cite{Ye2022,Zhang2025AFDMEnabledISACJSAC}
\begin{equation}
 A_{\tau,\mu}^{{\mathbf{s}}}
 =
 \mathbf{s}^{\mathsf{H}}
 \mathbf{U}_{\tau,\mu}
 \mathbf{s},
 \label{eq:AF_def_time_domain}
\end{equation}
where $\mathbf{U}_{\tau,\mu}
 =
 \mathbf{J}_{\tau}\mathbf{D}(\mu)$.
The delay-shift matrix $\mathbf{J}_{\tau}$ is given by
\begin{align*}
 &[\mathbf{J}_{\tau}]_{i,j}
 =
 \begin{cases}
 1, & i-j \equiv \tau \; (\mathrm{mod}\; N),\\[1ex]
 0, & \text{otherwise},
 \end{cases}
 \\ &\text{for } i,j = 0,\ldots,N-1,
\end{align*}
and the Doppler-shift matrix is
\begin{equation*}
 \mathbf{D}(\mu)
 =
 \operatorname{diag}\left(
 e^{-j 2\pi \mu \cdot 0/N},
 e^{-j 2\pi \mu \cdot 1/N},
 \ldots,
 e^{-j 2\pi \mu \cdot (N-1)/N}
 \right).
\end{equation*}
Here, $\tau \in \mathbb{Z}$ is the delay index in samples, the round-trip delay is $\tau \Delta t$ with $\Delta t = T/N$, corresponding to a point target at range $R \approx c \tau \Delta t/2$, where $c$ is the speed of light. The normalized Doppler index $\mu \in\mathbb{R}$ is related to the physical Doppler shift $f_{\mathrm{D}}$ (in Hz) by $\mu=f_{\mathrm{D}}T=Nf_{\mathrm{D}}\Delta t$. For a carrier frequency $f_{\mathrm{c}}$, this Doppler shift corresponds to a radial velocity $v$ through $f_{\mathrm{D}} \approx 2 v f_{\mathrm{c}}/c$ in a monostatic sensing link.

In practical scenarios, the maximum delay and maximum Doppler shift may be much smaller than the sequence duration and signal bandwidth \cite{1214833, meng2024}. Hence, it is of strong practical significance to shape the AF within a prescribed delay--Doppler LAZ \cite{Ye2022,meng2025}. Let the delay index satisfy $-\tau_{\max} \le \tau \le \tau_{\max}$ and the normalized Doppler frequency lie in a closed interval $[\mu_{\min},\mu_{\max}]$. To approximate an integral over Doppler, we partition $[\mu_{\min},\mu_{\max}]$ into $L_{\mu}$ grid points
\[
 \mu_{q}
 =
 \mu_{\min} + (q-1)\,\Delta\mu,
 \qquad
 q = 1,\ldots,L_{\mu},
\]
where $\Delta\mu = (\mu_{\max}-\mu_{\min})/(L_\mu-1)$. We then collect all delay--Doppler sampling points of interest in the finite set
\[
 \mathcal{A}
 =
 \left\{
 (\tau,\mu_{q})
 \,\big|\,
 -\tau_{\max} \le \tau \le \tau_{\max},
 \; 1 \le q \le L_{\mu}
 \right\}
 \setminus \{(0,0)\}.
\]
Given nonnegative weights $\{ w_{\tau,\mu_{q}} \}_{(\tau,\mu_{q})\in\mathcal{A}}$, the weighted ISL of the AF is defined as
\begin{equation}
 J_{\mathrm{ISL}}
 =
 \sum_{(\tau,\mu_{q})\in\mathcal{A}}
 w_{\tau,\mu_{q}}\,
 \left | A_{\tau,\mu_{q}}^{{\mathbf{s}}} \right |^{2}.
 \label{eq:ISL_def}
\end{equation}
\textcolor{black}{The AF characterizes the response of the sensing receiver to delayed and Doppler-shifted replicas of the transmitted waveform. High AF sidelobes associated with a strong target may mask weaker targets and thus degrade detection performance in multitarget scenarios~\cite{Ye2022,Liu2025SensingCommSignals}. We adopt the weighted ISL in \eqref{eq:ISL_def} as the waveform-design metric because it directly quantifies the aggregate AF-sidelobe energy over the LAZ~\cite{Ye2022,meng2025}. In the numerical results, we further evaluate the sensing benefit of the sidelobe reduction in terms of the detection probability $P_{\mathrm{D}}$ and false-alarm probability $P_{\mathrm{FA}}$. Here, $P_{\mathrm{D}}$ denotes the probability that a target is correctly detected at its corresponding delay--Doppler bin, while $P_{\mathrm{FA}}$ denotes the probability of declaring a target at a given delay--Doppler bin when no valid target echo is received~\cite{Skolnik2008Radar}.}

The PAPR of the AFDM transmit waveform $\mathbf{s}$ is defined as
\begin{equation}
\mathrm{PAPR}(\mathbf{s})
=
\frac{\displaystyle
\max_{0 \le n < N L_{\mathrm{P}}}
\left|\tilde{s}^{(\mathrm{P})}[n]\right|^{2}}
{\displaystyle
\frac{1}{N L_{\mathrm{P}}}
\sum_{n=0}^{N L_{\mathrm{P}}-1}
\left|\tilde{s}^{(\mathrm{P})}[n]\right|^{2}},
\label{eq:PAPR_def}
\end{equation}
where $\tilde{\mathbf{s}}^{(\mathrm{P})}\in \mathbb{C}^{N L_{\mathrm{P}}}$ denotes an $L_{\mathrm{P}}$-times oversampled discrete-time representation of the Nyquist-rate waveform $\mathbf{s}$. For PAPR evaluation, it is obtained through DFT-based interpolation, implemented by zero padding the DFT of $\mathbf{s}$ and applying an $N L_{\mathrm{P}}$-point IDFT \cite{ORPAPR,OFDMPAPR}. Specifically,
\begin{equation}
\tilde{\mathbf{s}}^{(\mathrm{P})}
=
\mathbf{P}_{L_{\mathrm{P}}}\mathbf{s},
\label{eq:s_tilde_matrix}
\end{equation}
where $\mathbf{P}_{L_{\mathrm{P}}}
=
\sqrt{L_{\mathrm{P}}}\,
\mathbf{F}_{N L_{\mathrm{P}}}^{\mathsf{H}}
\mathbf{E}_{L_{\mathrm{P}}}
\mathbf{F}_{N}$. Let $N_{+}=\left\lceil\frac{N}{2}\right\rceil$,
$N_{-}=\left\lfloor\frac{N}{2}\right\rfloor$. Then, 
\begin{equation}
\mathbf{E}_{L_{\mathrm{P}}}
=
\begin{bmatrix}
\mathbf{I}_{N_{+}}
&
\mathbf{0}_{N_{+}\times N_{-}}
\\
\mathbf{0}_{(N L_{\mathrm{P}}-N)\times N_{+}}
&
\mathbf{0}_{(N L_{\mathrm{P}}-N)\times N_{-}}
\\
\mathbf{0}_{N_{-}\times N_{+}}
&
\mathbf{I}_{N_{-}}
\end{bmatrix}
\in\mathbb{C}^{N L_{\mathrm{P}}\times N}.
\label{eq:ELP_def}
\end{equation}

It is well known that $L_{\mathrm{P}} \ge 4$ provides a good
discrete-time approximation of the continuous-time PAPR at the
transmit PA input \cite{ORPAPR,OFDMPAPR}. Therefore, in this paper,
we set $L_{\mathrm{P}}=4$ throughout.

\section{Problem Formulation}
\label{ProblemF}

In this section, we formulate the AFDM waveform design problem for ISAC under different design settings. Building on the signal and metric definitions in Section~\ref{Preliminaries}, we reserve a subset of chirp-subcarriers that do not carry data symbols and optimize the RCS symbols as a practical waveform-design handle. We also consider per-subcarrier pre-chirp parameters $c_{2,m}$ as additional design variables.

\subsection{Proposed AFDM Transmitter Structure}

\begin{figure*}[htbp]
 \centering
 \includegraphics[width=1.93\columnwidth,
 trim=00 1100 10 00,clip]{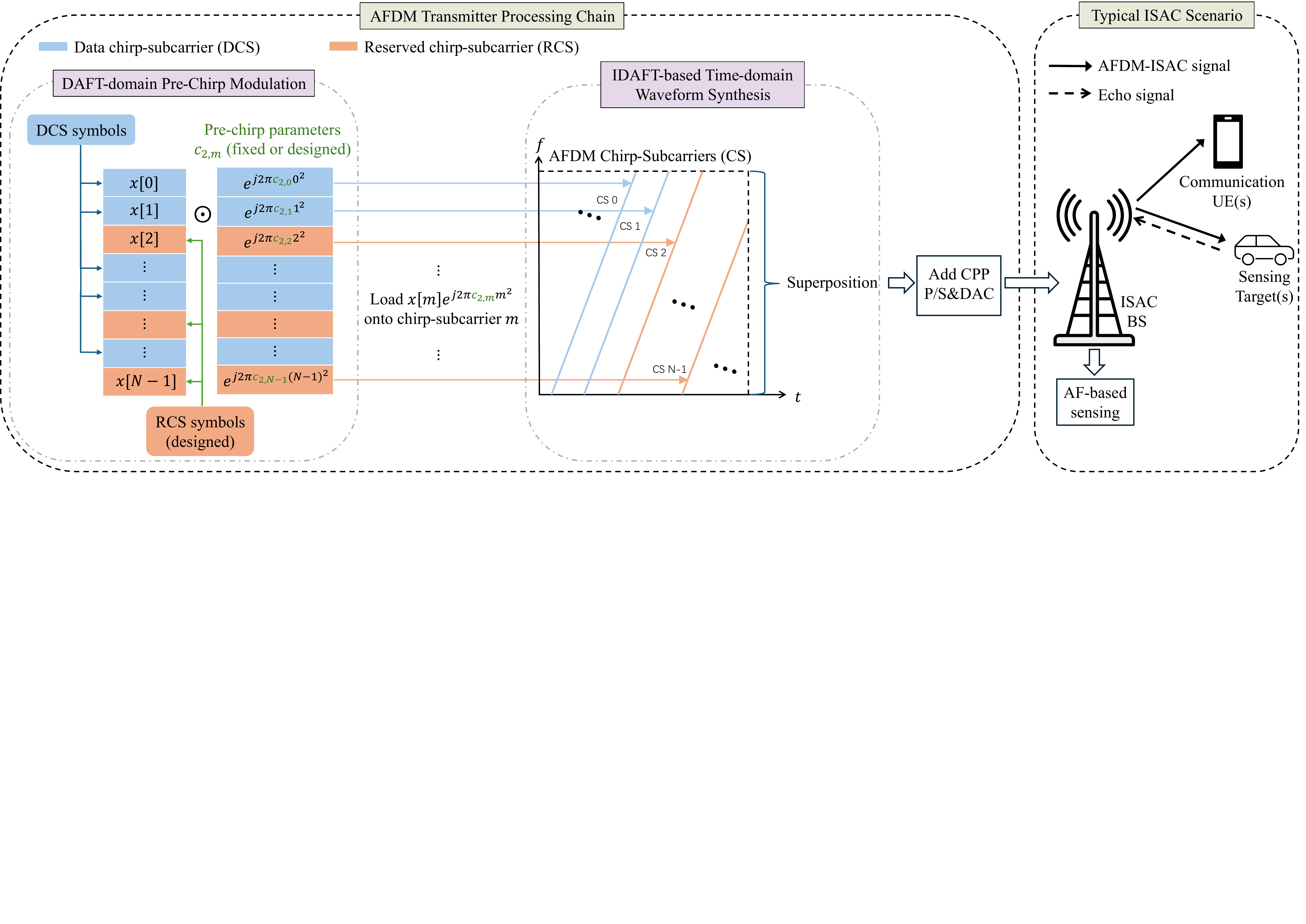}
\caption{Proposed AFDM transmitter processing chain and typical ISAC scenario.}
\vspace{-0.3cm}
 \label{Fig2}
\end{figure*}

We first partition the $N$ AFDM chirp-subcarriers into \emph{data chirp-subcarriers (DCSs)} and \emph{RCSs} (as shown in Fig.~{\ref{Fig2}}), indexed by $\mathcal{D}$ and $\mathcal{R}$, respectively, such that
\begin{align}
 \mathcal{D} \cup \mathcal{R} = \{0,1,\ldots,N-1\},
 \qquad
 \mathcal{D} \cap \mathcal{R} = \emptyset.
\end{align}
DCSs $\mathcal{D}$ are used to convey communication symbols, i.e. $x[m] \in \mathcal{X}, m \in \mathcal{D}$, where $\mathcal{X}$ denotes the chosen constellation (e.g., phase-shift keying (PSK) or quadrature amplitude modulation (QAM)). In contrast, RCSs
$\mathcal{R}$ do not carry user data and are instead dedicated to waveform design. \textcolor{black}{The partition $(\mathcal{D},\mathcal{R})$ is fixed for a given system configuration and is pre-agreed between the transmitter and receiver. It only needs to be communicated when the RCS allocation is updated, rather than for each AFDM symbol. For the highest-index RCS allocation used in the main simulations, the partition is fully specified by $|\mathcal{R}|$. For example, with $N=128$, at most $\lceil\log_2(N+1)\rceil=8$ bits are required to signal this count. Thus, this infrequent configuration signaling is not included in the effective spectral efficiency introduced later.}

\subsubsection{Reserved chirp-subcarrier symbols}

Since the RCSs do not carry information bits, the RCS symbols can be optimized directly as waveform-design variables, without requiring additional side information for symbol demapping at the receiver. Their amplitudes and phases may be jointly adjusted, subject to the total waveform-energy constraint introduced later.

\subsubsection{Per-subcarrier pre-chirp parameters}

To provide additional waveform-design flexibility, we allow the pre-chirp parameters on the DCSs to vary across subcarriers \cite{AFDM_PAPR}. For each DCS, the parameter $c_{2,m}$ is selected from a finite alphabet $\mathcal{C}_{2,m} = \left\{c_{2,m}^{(1)},\ldots,c_{2,m}^{(L_{\phi})}\right\}$. On an RCS, the factor $e^{j2\pi c_{2,m}m^2}$ can be absorbed into the designed RCS symbol. When the pre-chirp parameters are optimized, the selected indices require $B_{\mathrm{SI}} =|\mathcal{D}|\log_2 L_{\phi}$ side-information bits per AFDM symbol.

\textcolor{black}{To jointly account for the reduced number of data-carrying subcarriers due to the RCSs and the additional signaling overhead required by the optimized pre-chirp parameters, we define the effective spectral efficiency in \eqref{eq:eta_eff}, where $b_{\mathrm{data}}$ denotes the number of information bits carried by each data symbol, and $b_{\mathrm{SI}}$ denotes the number of bits carried by each side-information symbol:}
\begin{equation}
    \color{black}
    R_{\mathrm{eff}}
    =
    \frac{|\mathcal{D}|b_{\mathrm{data}}}
    {N+\left\lceil B_{\mathrm{SI}}/b_{\mathrm{SI}}\right\rceil}.
    \color{black}
    \label{eq:eta_eff}
\end{equation}
\textcolor{black}{The numerator in \eqref{eq:eta_eff} counts the information bits carried by the DCSs in one AFDM symbol, whereas the denominator counts the $N$ chirp-subcarriers in the original AFDM symbol together with the additional chirp-subcarriers required to transmit the side information. When mixed data constellations are used, $|\mathcal{D}|b_{\mathrm{data}}$ is interpreted as the total number of information bits carried by all DCSs. Thus, $R_{\mathrm{eff}}$ measures the effective number of information bits per chirp-subcarrier after accounting for both RCSs and side-information overhead, and its unit is bits/subcarrier.}

\subsubsection{Unified design vector}

Accordingly, in the proposed framework, the RCS symbols are always treated as waveform-design variables, while the per-subcarrier pre-chirp parameters on the DCSs may either be fixed or optimized depending on the design setting. To unify the subsequent formulations, we define $\mathbf{u} =[u[0],\ldots,u[N-1]]^{\mathsf{T}}$ as the actual DAFT-domain input vector to the IDAFT after pre-chirp modulation. For each DCS $m\in\mathcal{D}$,
\color{black}
\begin{equation}
u[m]
=
x[m]e^{j2\pi c_{2,m}m^{2}},
\qquad m\in\mathcal{D}.
\label{eq:u_DCS_def}
\end{equation}
\color{black}
When $c_{2,m}$ is optimized, the corresponding feasible set is
\begin{equation}
\mathcal{U}_m
=
\left\{
x[m]e^{j2\pi c_{2,m}m^{2}}
\,\big|\,
c_{2,m}\in\mathcal{C}_{2,m}
\right\},
\label{eq:Um_def}
\end{equation}
All elements of $\mathcal{U}_m$ have modulus $|x[m]|$. If $0\in\mathcal{D}$, we define $\mathcal{U}_0=\{x[0]\}$. 

For each RCS $m\in\mathcal{R}$, $u[m]\in\mathbb{C}$ is directly treated as a complex waveform-design variable. Under the RCS-only setting, the entries $\{u[m]\}_{m\in\mathcal{D}}$ are fixed.

With the above definition, the AFDM transmit waveform can be expressed
as
\color{black}
\begin{equation}
\mathbf{s}
=
\boldsymbol{\Phi}\mathbf{u},
\label{eq:s_Phi_u}
\end{equation}
\color{black}
where $\boldsymbol{\Phi}
=
\boldsymbol{\Lambda}_{c_1}
\mathbf{F}_{N}^{\mathsf{H}}$. Accordingly, the $L_{\mathrm{P}}$-times oversampled waveform defined
in \eqref{eq:s_tilde_matrix} becomes
\color{black}\begin{equation}
\tilde{\mathbf{s}}^{(\mathrm{P})}
=
\boldsymbol{\Phi}^{(\mathrm{P})}\mathbf{u},
\label{eq:sP_PhiP_u}
\end{equation}\color{black}
where $\boldsymbol{\Phi}^{(\mathrm{P})}
=
\mathbf{P}_{L_{\mathrm{P}}}\boldsymbol{\Phi}$.

The total DAFT-domain waveform energy is
constrained as $\|\mathbf{u}\|^{2}
=
E_{\mathrm{T}}$, where $E_{\mathrm{T}}>0$ is a prescribed total waveform energy.

\noindent\textit{Receiver-side processing:}
When per-subcarrier pre-chirp parameters are used and made available at the intended receiver, the receiver processing follows a conventional AFDM receiver with $c_{2,m}$ (instead of a common $c_2$) \cite{AFDMOR, AFDM_PAPR}, so the detailed implementation is omitted for brevity. If the pre-chirp parameters are fixed across the DCSs, the receiver reduces to the corresponding standard AFDM implementation.  After CP removal and DAFT processing, only the symbols on $\mathcal{D}$ are demapped, while those on $\mathcal{R}$ are discarded.

\subsection{Design Modes and Performance Objectives}
\label{subsec:design_modes}

Under either waveform-design configuration, the proposed framework supports the following three design modes, which reflect different priorities between sensing performance and hardware efficiency.

\emph{1) AF shaping mode:}
The first mode focuses on sensing performance by minimizing the weighted ISL. This mode can be implemented either by optimizing only the RCS symbols or by jointly optimizing the RCS symbols and the per-subcarrier pre-chirp parameters. The goal is to suppress AF sidelobes within certain LAZ, thereby improving target detectability. 

\emph{2) PAPR minimization mode:}
The second mode aims to minimize the waveform PAPR. Likewise, this mode can be realized either with RCS-symbol optimization only or with additional per-subcarrier pre-chirp design. This mode focuses on reducing the dynamic range at the PA input, which mitigates nonlinear distortion.

\emph{3) Joint AF shaping and PAPR control mode:}
The third mode jointly accounts for sensing and hardware constraints by minimizing the ISL subject to an explicit PAPR constraint, i.e., by enforcing $\mathrm{PAPR} \le \Gamma$ for a prescribed threshold $\Gamma$. This mode balances AF-sidelobe suppression in the LAZ and the PAPR requirement. This mode will be the primary focus of our subsequent development, while the first two modes can be considered as special cases of the unified formulation based on this mode.

\subsection{Unified Optimization Problem}

To cover both the RCS-only setting and the setting that further exploits per-subcarrier pre-chirp parameters within a unified framework, we introduce the following optimization problem:
\begin{equation}
\begin{aligned}
 \min_{\mathbf{u}}
 \quad &
 J_{\mathrm{ISL}}(\mathbf{u})
 \\ 
 \text{s.t.}\quad
 &
 \mathrm{PAPR}_{\mathbf{u}}
 \le \Gamma,
 \\ 
 &
\|\mathbf{u}\|^{2} = E_{\mathrm{T}},
 \\ 
 &
 u[m] \in \mathcal{U}_m,
 \quad
 m \in \mathcal{D},
\end{aligned}
\label{eq:unified_problem}
\end{equation}
where $J_{\mathrm{ISL}}(\mathbf{u})$ and $\mathrm{PAPR}_{\mathbf{u}}$ are specified next. In the analytical development, $\Gamma$ is used on a linear scale; when its value is reported in dB, it refers to $10\log_{10}\Gamma$.

\paragraph*{ISL in terms of $\mathbf{u}$}
Based on \eqref{eq:s_Phi_u}, the AF at $(\tau,\mu_{q})$ can be written as $A_{\tau,\mu_{q}} (\mathbf{u})
 =
 \mathbf{u}^{\mathsf{H}}
 \mathbf{C}_{\tau,\mu_{q}}
 \mathbf{u}$,
 $\mathbf{C}_{\tau,\mu_{q}}
 =
 \boldsymbol{\Phi}^{\mathsf{H}}
 \mathbf{U}_{\tau,\mu_{q}}
 \boldsymbol{\Phi}$.
Then, using the LAZ sampling set $\mathcal{A}$ and weights $w_{\tau,\mu_{q}}$, the weighted ISL becomes
\begin{equation}
 J_{\mathrm{ISL}}(\mathbf{u})
 =
 \sum_{(\tau,\mu_{q})\in\mathcal{A}}
 w_{\tau,\mu_{q}}\,
 \left|
 \mathbf{u}^{\mathsf{H}}
 \mathbf{C}_{\tau,\mu_{q}}
 \mathbf{u}
 \right|^{2}.
 \label{eq:ISL_u_def}
\end{equation}

\paragraph*{PAPR in terms of $\mathbf{u}$}

Based on \eqref{eq:sP_PhiP_u}, the power of the $n$-th oversampled time-domain sample is
\begin{equation}
p_n(\mathbf{u})
=
\mathbf{u}^{\mathsf{H}}
\mathbf{G}_n
\mathbf{u},
\label{eq:pn_def}
\end{equation}
where $\mathbf{G}_n
=
\left(
\boldsymbol{\Phi}^{(\mathrm{P})}
\right)^{\mathsf{H}}
\mathbf{e}_n\mathbf{e}_n^{\mathsf{H}}
\boldsymbol{\Phi}^{(\mathrm{P})}$.

Using
$\mathbf{P}_{L_{\mathrm{P}}}^{\mathsf{H}}
\mathbf{P}_{L_{\mathrm{P}}}
=L_{\mathrm{P}}\mathbf{I}_N$, together with
$\boldsymbol{\Phi}^{\mathsf{H}}\boldsymbol{\Phi}
=\mathbf{I}_N$
and $\|\mathbf{u}\|^2=E_{\mathrm{T}}$, this gives $\frac{1}{N L_{\mathrm{P}}}
\left\|
\tilde{\mathbf{s}}^{(\mathrm{P})}
\right\|^2
=
\frac{E_{\mathrm{T}}}{N}$. Therefore,
\begin{equation}
\mathrm{PAPR}_{\mathbf{u}}
=
\frac{N}{E_{\mathrm{T}}}
\max_{0\le n<N L_{\mathrm{P}}}
\mathbf{u}^{\mathsf{H}}
\mathbf{G}_n
\mathbf{u}.
\label{eq:PAPR_u_def}
\end{equation}

For the RCS-only setting, the pre-chirp parameters on the DCSs are fixed according to a prescribed AFDM configuration. Equivalently, the corresponding entries of $\mathbf{u}$ on $\mathcal{D}$ are fixed to known constants, denoted by $\bar{u}[m]$, for $m\in\mathcal{D}$. Accordingly, the RCS-only counterpart of the joint AF shaping and PAPR control problem in \eqref{eq:unified_problem} is
\begin{equation}
\begin{aligned}
 \min_{\mathbf{u}}
 \quad &
 J_{\mathrm{ISL}}(\mathbf{u})
 \\ 
 \text{s.t.}\quad
 &
 \mathrm{PAPR}_{\mathbf{u}}
 \le \Gamma,
 \\ 
 &
 u[m] = \bar{u}[m],
 \quad
 m \in \mathcal{D},
 \\ 
 &
\|\mathbf{u}\|^{2} = E_{\mathrm{T}}.
\end{aligned}
\label{eq:RS_only_problem}
\end{equation}
The formulation in \eqref{eq:unified_problem} and \eqref{eq:RS_only_problem} covers different design modes as special cases. The AF shaping mode is obtained by minimizing $J_{\mathrm{ISL}}(\mathbf{u})$ without the PAPR constraint, while the PAPR minimization mode is obtained by minimizing $\mathrm{PAPR}_{\mathbf{u}}$ and removing the PAPR constraint.

\vspace{-0.2cm}
\section{Proposed Algorithm}
\label{sec:Alg}

We develop a JIPD-MM algorithm for solving the unified AFDM waveform optimization problem in \eqref{eq:unified_problem}. The corresponding RCS-only designs are obtained by fixing the DCS entries as in \eqref{eq:RS_only_problem}.

The unified optimization problem in \eqref{eq:unified_problem} is challenging for three reasons: i) the weighted ISL objective $J_{\mathrm{ISL}}(\mathbf{u})$ in \eqref{eq:ISL_u_def} is nonconvex; ii) the PAPR metric in \eqref{eq:PAPR_u_def} contains a nonsmooth maximum over the oversampled sample powers; iii) the per-subcarrier sets $\mathcal{U}_m$ impose discrete-phase constraints. To cope with these difficulties, the proposed JIPD-MM algorithm proceeds in iterations indexed by $r$. At $r$-th iteration, the non-convex ISL objective and the PAPR constraint are replaced by smooth surrogate functions constructed around the current iterate $\mathbf{u}^{(r)}$. The construction of the ISL surrogate is presented in Section~\ref{subsec:ISL_surrogate}. In Section~\ref{subsec:PAPR_surrogate}, the max-type PAPR constraint is converted into a tractable smooth surrogate and incorporated via a penalty formulation. In Section~\ref{subsec:NSP_discrete}, the discrete-phase constraints are handled by an initialization stage based on convex-hull relaxation and negative square penalty (NSP), followed by a main optimization stage over the original alphabets. The overall proposed JIPD-MM algorithm is presented in Section~\ref{subsec:JIPD_overall}.

\subsection{ISL Surrogate}
\label{subsec:ISL_surrogate}

Based on the weighted ISL definition in \eqref{eq:ISL_u_def}, we now construct a linear MM surrogate for $J_{\mathrm{ISL}}(\mathbf{u})$ by repeatedly leveraging the following lemma~\cite{Song2016}.

\begin{lemma}[Quadratic majorization]
\label{lemma1}
Let $\mathbf{Y},\mathbf{Z}\in\mathbb{C}^{d\times d}$ be Hermitian matrices such that $\mathbf{Z} \succeq \mathbf{Y}$. Then, for any reference point $\mathbf{q}^{(r)}\in\mathbb{C}^{d}$, the following inequality holds for all $\mathbf{q}\in\mathbb{C}^{d}$:
\begin{equation}
 \mathbf{q}^{\mathsf{H}}\mathbf{Y}\mathbf{q}
 \le
 \mathbf{q}^{\mathsf{H}}\mathbf{Z}\mathbf{q}
 + 2\Re\!\left\{
 \mathbf{q}^{\mathsf{H}}(\mathbf{Y}-\mathbf{Z})\mathbf{q}^{(r)}
 \right\}
 +
 \left(\mathbf{q}^{(r)}\right)^{\mathsf{H}}
 (\mathbf{Z}-\mathbf{Y})
 \mathbf{q}^{(r)}.
 \label{eq:lemma1_ineq}
\end{equation}
Thus, the quadratic function $\mathbf{q}^{\mathsf{H}}\mathbf{Y}\mathbf{q}$ can be majorized at $\mathbf{q}^{(r)}$ by the right-hand side of \eqref{eq:lemma1_ineq}.
\end{lemma}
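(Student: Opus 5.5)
The plan is to write the gap between the two sides of \eqref{eq:lemma1_ineq} as a single Hermitian quadratic form in the displacement $\mathbf{q}-\mathbf{q}^{(r)}$ with matrix $\mathbf{Z}-\mathbf{Y}$. Positive semidefiniteness of $\mathbf{Z}-\mathbf{Y}$ then gives the inequality at once. No earlier result of the paper is needed; the argument is pure linear algebra.

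Set $\mathbf{M}=\mathbf{Z}-\mathbf{Y}$, which is Hermitian with $\mathbf{M}\succeq\mathbf{0}$ by assumption. For $\mathbf{d}=\mathbf{q}-\mathbf{q}^{(r)}$ we have $\mathbf{d}^{\mathsf{H}}\mathbf{M}\mathbf{d}\ge 0$. Expanding this with the Hermitian property gives
\begin{equation*}
\mathbf{d}^{\mathsf{H}}\mathbf{M}\mathbf{d}
=\mathbf{q}^{\mathsf{H}}\mathbf{M}\mathbf{q}
-2\Re\!\left\{\mathbf{q}^{\mathsf{H}}\mathbf{M}\mathbf{q}^{(r)}\right\}
+\left(\mathbf{q}^{(r)}\right)^{\mathsf{H}}\mathbf{M}\mathbf{q}^{(r)}.
\end{equation*}
The cross terms combine into twice a real part because $(\mathbf{q}^{(r)})^{\mathsf{H}}\mathbf{M}\mathbf{q}=\big(\mathbf{q}^{\mathsf{H}}\mathbf{M}\mathbf{q}^{(r)}\big)^{*}$ when $\mathbf{M}^{\mathsf{H}}=\mathbf{M}$.

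Next, substitute $\mathbf{M}=\mathbf{Z}-\mathbf{Y}$ and use linearity to split $\mathbf{q}^{\mathsf{H}}\mathbf{M}\mathbf{q}=\mathbf{q}^{\mathsf{H}}\mathbf{Z}\mathbf{q}-\mathbf{q}^{\mathsf{H}}\mathbf{Y}\mathbf{q}$. Note also that $-2\Re\{\mathbf{q}^{\mathsf{H}}(\mathbf{Z}-\mathbf{Y})\mathbf{q}^{(r)}\}=2\Re\{\mathbf{q}^{\mathsf{H}}(\mathbf{Y}-\mathbf{Z})\mathbf{q}^{(r)}\}$. Moving $\mathbf{q}^{\mathsf{H}}\mathbf{Y}\mathbf{q}$ to the other side of $\mathbf{d}^{\mathsf{H}}\mathbf{M}\mathbf{d}\ge0$ then gives exactly \eqref{eq:lemma1_ineq}.

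To justify calling the right-hand side a majorizer, I would also check tightness. At $\mathbf{q}=\mathbf{q}^{(r)}$ we have $\mathbf{d}=\mathbf{0}$, so equality holds. All the terms involved are real: the quadratic forms because the matrices are Hermitian, and the cross term by construction.

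The only real obstacle is bookkeeping in the complex-valued setting. Specifically, one must verify that the two cross terms are conjugates of each other, which uses Hermitian $\mathbf{M}$, and get the sign convention $\mathbf{Y}-\mathbf{Z}$ versus $\mathbf{Z}-\mathbf{Y}$ right in the linear term. Everything else is routine expansion.
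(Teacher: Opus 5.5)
Your proposal is correct and follows the paper's own argument: the paper likewise identifies the gap between the two sides of \eqref{eq:lemma1_ineq} as $\left(\mathbf{q}-\mathbf{q}^{(r)}\right)^{\mathsf{H}}(\mathbf{Z}-\mathbf{Y})\left(\mathbf{q}-\mathbf{q}^{(r)}\right)\ge 0$ and notes that it vanishes at $\mathbf{q}=\mathbf{q}^{(r)}$. Your explicit check that the two cross terms are complex conjugates because $\mathbf{Z}-\mathbf{Y}$ is Hermitian is the one detail the paper leaves implicit.
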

\color{black}
The difference between the right-hand side and the left-hand side
of \eqref{eq:lemma1_ineq} is
\begin{equation}
\left(
\mathbf{q}-\mathbf{q}^{(r)}
\right)^{\mathsf{H}}
\left(
\mathbf{Z}-\mathbf{Y}
\right)
\left(
\mathbf{q}-\mathbf{q}^{(r)}
\right)
\ge 0.
\label{eq:quadratic_majorization_gap}
\end{equation}
Therefore, the gap is zero at $\mathbf{q}=\mathbf{q}^{(r)}$. Hence, the original quadratic function and its majorizer have the same value and the same first-order directional derivative at $\mathbf{q}^{(r)}$.
\color{black}

To apply \textit{Lemma~\ref{lemma1}}, we rewrite $J_{\mathrm{ISL}}(\mathbf{u})$ as a quadratic form of $\mathbf{z} = \operatorname{vec}(\mathbf{u}\mathbf{u}^{\mathsf{H}})$. 
For each $(\tau,\mu_{q})\in\mathcal{A}$, we define $\mathbf{v}_{\tau,\mu_{q}}
 =
 \operatorname{vec}\!\left(
 \left( \mathbf{C}_{\tau,\mu_{q}}\right)^{\mathsf{H}}
 \right)$, so that $\mathbf{u}^{\mathsf{H}}
  \mathbf{C}_{\tau,\mu_{q}}
 \mathbf{u}
 =
 \mathbf{v}_{\tau,\mu_{q}}^{\mathsf{H}}\mathbf{z}$.
Then, the weighted ISL in \eqref{eq:ISL_u_def} can be expressed as
\begin{equation}
 J_{\mathrm{ISL}}(\mathbf{u})
 =
 \mathbf{z}^{\mathsf{H}}
 \mathbf{J}
 \mathbf{z},
 \label{eq:ISL_vec_form}
\end{equation}
where $\mathbf{J}
 =
 \sum_{(\tau,\mu_{q})\in\mathcal{A}}
 w_{\tau,\mu_{q}}\,
 \mathbf{v}_{\tau,\mu_{q}}
 \mathbf{v}_{\tau,\mu_{q}}^{\mathsf{H}}
 \in\mathbb{C}^{N^{2}\times N^{2}}$.

Using $\|\mathbf{u}\|^{2}=E_{\mathrm{T}}$, which implies $\|\mathbf{z}\|^{2}=E_{\mathrm{T}}^{2}$, and applying \textit{Lemma~\ref{lemma1}} to \eqref{eq:ISL_vec_form}, after ignoring constant terms, $J_{\mathrm{ISL}}(\mathbf{u})$ can be majorized at $\mathbf{u}^{(r)}$ by
\begin{equation}
 \widetilde{J}_{\mathrm{ISL}}(\mathbf{u}\,|\,\mathbf{u}^{(r)})
 =
 2\Re\!\left\{
 (\mathbf{z}^{(r)})^{\mathsf{H}}
 \mathbf{M}_{\mathrm{J}}
 \mathbf{z}
 \right\},
 \label{eq:ISL_surrogate_z}
\end{equation}
where $\mathbf{z}^{(r)}
 =
 \operatorname{vec}\!\left(\mathbf{u}^{(r)}{\mathbf{u}^{(r)}}^{\mathsf{H}}\right)$, $\mathbf{M}_{\mathrm{J}}
 =
 \mathbf{J} - \lambda_{\mathrm{J}}\mathbf{I}_{N^{2}}$, and $\lambda_{\mathrm{J}} \ge \lambda_{\max}(\mathbf{J})$.

By substituting $\mathbf{z} = \operatorname{vec}(\mathbf{u}\mathbf{u}^{\mathsf{H}})$ back into \eqref{eq:ISL_surrogate_z}, the surrogate can be transformed into
\begin{align}
 &\widetilde{J}_{\mathrm{ISL}}(\mathbf{u}\,|\,\mathbf{u}^{(r)}) \nonumber \\
 =
 &2
 \sum_{(\tau,\mu_{q})\in\mathcal{A}}
 w_{\tau,\mu_{q}}\,
 \Re\!\left\{
 \left(\zeta_{\tau,\mu_{q}}^{(r)}\right)^{*}
 \mathbf{u}^{\mathsf{H}}
  \mathbf{C}_{\tau,\mu_{q}}
 \mathbf{u}
 \right\}
 \;-\;
 2\lambda_{\mathrm{J}}
 \left|
 \mathbf{u}^{\mathsf{H}}\mathbf{u}^{(r)}
 \right|^{2},
 \label{eq:ISL_surrogate_u}
\end{align}
where $\zeta_{\tau,\mu_{q}}^{(r)} = \left(\mathbf{u}^{(r)}\right)^{\mathsf{H}}  \mathbf{C}_{\tau,\mu_{q}} \mathbf{u}^{(r)}$.

By regrouping the quadratic terms in \eqref{eq:ISL_surrogate_u}, it can be written as
\begin{equation}
 \widetilde{J}_{\mathrm{ISL}}(\mathbf{u}\,|\,\mathbf{u}^{(r)})
 =
 \mathbf{u}^{\mathsf{H}}
 \mathbf{Q}_0^{(r)}
 \mathbf{u}
 + \mathrm{const},
 \label{eq:ISL_Q0_form}
\end{equation}
where
\begin{align}
\mathbf{Q}_0^{(r)}
 &=\widetilde{\mathbf{Q}}^{(r)}
 + \left(\widetilde{\mathbf{Q}}^{(r)}\right)^{\mathsf{H}}
\nonumber\\
 \widetilde{\mathbf{Q}}^{(r)}
 &=
 \sum_{(\tau,\mu_{q})\in\mathcal{A}}
 w_{\tau,\mu_{q}}\,
 \left(\zeta_{\tau,\mu_{q}}^{(r)}\right)^{*}
  \mathbf{C}_{\tau,\mu_{q}}
 - \lambda_{\mathrm{J}}\,
 \mathbf{u}^{(r)}{\mathbf{u}^{(r)}}^{\mathsf{H}}.
\end{align}

Applying \textit{Lemma~\ref{lemma1}} again to \eqref{eq:ISL_Q0_form}, we obtain a linear surrogate
\begin{equation}
 \widehat{J}_{\mathrm{ISL}}(\mathbf{u}\,|\,\mathbf{u}^{(r)})
 =
 2\Re\!\left\{
 \left(\mathbf{d}^{(r)}\right)^{\mathsf{H}}\mathbf{u}
 \right\},
 \label{eq:ISL_surrogate_linear}
\end{equation}
where $\mathbf{d}^{(r)}
 =
 \left(\mathbf{M}_0^{(r)}\right)^{\mathsf{H}}\mathbf{u}^{(r)}$, $\mathbf{M}_0^{(r)}
 =
 \mathbf{Q}_0^{(r)} - \lambda_{\mathrm{Q}}^{(r)}\mathbf{I}_{N}$, and $\lambda_{\mathrm{Q}}^{(r)}
 \ge
 \lambda_{\max}\!\left(\mathbf{Q}_0^{(r)}\right)$.

\color{black}
The constant terms omitted from \eqref{eq:ISL_surrogate_z}--\eqref{eq:ISL_surrogate_linear} do not depend on $\mathbf{u}$ and hence do not affect the update. The corresponding full majorizer coincides with $J_{\mathrm{ISL}}(\mathbf{u})$ at $\mathbf{u}=\mathbf{u}^{(r)}$ and has the same first-order directional derivative there.
\color{black}

\subsection{PAPR Surrogate and Penalty Formulation}
\label{subsec:PAPR_surrogate}
In this subsection, we construct an MM-based surrogate and a penalty formulation for the PAPR constraint in \eqref{eq:unified_problem}. From \eqref{eq:PAPR_u_def}, the constraint $\mathrm{PAPR}_{\mathbf{u}}\le\Gamma$ is equivalent to
\begin{equation}
\max_{0\le n<N L_{\mathrm{P}}}
p_n(\mathbf{u})
\le
\Gamma_{\mathrm{p}},
\label{eq:PAPR_peak_constraint}
\end{equation}
where $\Gamma_{\mathrm{p}}
=
\frac{E_{\mathrm{T}}}{N}\Gamma$.

To handle the non-smooth max operator in \eqref{eq:PAPR_peak_constraint}, we consider an $\ell$-norm approximation. Specifically, we conservatively approximate \eqref{eq:PAPR_peak_constraint} by
\begin{equation}
 \left(
 \sum_{n=0}^{N L_{\mathrm{P}}-1}
 \left( p_n(\mathbf{u}) \right)^{\ell}
 \right)^{\!1/\ell}
 \le \Gamma_{\text{p}},
 \label{eq:PAPR_ell_norm}
\end{equation}
where $\ell \ge 2$ is an integer. \textcolor{black}{The left-hand side of \eqref{eq:PAPR_ell_norm} upper-bounds $\max_n p_n(\mathbf{u})$ and converges to it as $\ell \to \infty$. Since the average sample power is fixed at $E_{\mathrm{T}}/N$, controlling this smooth upper bound also controls the PAPR in \eqref{eq:PAPR_u_def}. Consequently, by choosing $\ell$ sufficiently large, the peak sample power constraint in \eqref{eq:PAPR_peak_constraint} can be conservatively approximated by
\begin{equation}
 \sum_{n=0}^{N L_{\mathrm{P}}-1}
 \left( p_n(\mathbf{u}) \right)^{\ell}
 \le \Gamma_{\ell},
 \label{eq:PAPR_moment_constraint}
\end{equation}
where $\Gamma_{\ell} = \Gamma_{\text{p}}^{\ell}$.}

Then, we derive an MM-compatible surrogate for \eqref{eq:PAPR_moment_constraint} using the following lemma~\cite{Song2016}.

\begin{lemma}[$\ell$-norm majorization]
\label{lemma:PAPR_power}
Consider the $\ell$-th power function $f(x)=x^{\ell}$ with $\ell\ge2$ and $x\in[0,t]$. For any $x_0\in[0,t)$, $f(x)$ is majorized at $x_0$ over the interval $[0,t]$ by the quadratic function
\begin{equation}
 \alpha x^2
 +
 \left(\ell x_0^{\ell-1} - 2\alpha x_0\right) x
 +
 \alpha x_0^2
 -
 (\ell-1) x_0^{\ell},
 \label{eq:lemma2_quad}
\end{equation}
where
\begin{equation}
 \alpha
 =
 \frac{t^{\ell} - x_0^{\ell} - \ell x_0^{\ell-1} (t-x_0)}
 {(t-x_0)^2}.
 \label{eq:lemma2_alpha}
\end{equation}
\end{lemma}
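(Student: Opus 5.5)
The plan is to write the quadratic in \eqref{eq:lemma2_quad} as $g(x)=f(x_0)+f'(x_0)(x-x_0)+\alpha(x-x_0)^2$ and study the gap $h(x)=g(x)-f(x)$ on $[0,t]$. Expanding $\alpha(x-x_0)^2+\ell x_0^{\ell-1}(x-x_0)+x_0^{\ell}$ gives exactly \eqref{eq:lemma2_quad}, since $x_0^\ell-\ell x_0^{\ell}=-(\ell-1)x_0^\ell$. In this form the tangency conditions $g(x_0)=f(x_0)$ and $g'(x_0)=f'(x_0)$ are immediate, and \eqref{eq:lemma2_alpha} is precisely the value of $\alpha$ for which $g(t)=f(t)$. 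So it remains to show $h(x)\ge 0$ on $[0,t]$.

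Define the divided-difference function
\[
\phi(x)=\frac{f(x)-f(x_0)-f'(x_0)(x-x_0)}{(x-x_0)^2},\qquad x\neq x_0,
\]
with $\phi(x_0)=f''(x_0)/2$. Then $h(x)=(x-x_0)^2\big(\alpha-\phi(x)\big)$ and $\alpha=\phi(t)$. The claim therefore reduces to $\phi(x)\le\phi(t)$ for all $x\in[0,t]$, and it suffices to show that $\phi$ is nondecreasing on $[0,\infty)$.

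The monotonicity step is the main obstacle, and I will handle it through the integral representation
\[
\phi(x)=\int_0^1 (1-s)\,f''\big(x_0+s(x-x_0)\big)\,ds .
\]
Here $f''(y)=\ell(\ell-1)y^{\ell-2}$ is nondecreasing on $[0,\infty)$ because $\ell\ge2$. For fixed $s\in[0,1]$, the argument $x_0+s(x-x_0)$ is nondecreasing in $x$ and stays in $[0,\max(x,x_0)]\subset[0,\infty)$. Hence the integrand is nondecreasing in $x$, and so is $\phi$. This argument uses only $f'''\ge0$ on the nonnegative reals, which is where the restrictions $x\ge0$ and $\ell\ge2$ enter.

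Putting these together gives $\alpha=\phi(t)\ge\phi(x)$ for every $x\in[0,t]$, so $h\ge0$ and the quadratic majorizes $x^\ell$ on $[0,t]$, touching it at $x_0$.

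The same argument shows $\alpha\ge \phi(x_0)=\tfrac{\ell(\ell-1)}{2}x_0^{\ell-2}\ge 0$, so the majorizer is convex. This convexity will be what makes it usable when $x=p_n(\mathbf{u})$ is substituted later.
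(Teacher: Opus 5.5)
Your proof is correct. It is also more complete than what the paper gives. The paper does not prove this lemma: it imports it from \cite{Song2016} and only notes that the quadratic matches $x^{\ell}$ in value and first derivative at $x_0$, which by itself does not give domination on all of $[0,t]$.

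You supply that global part in three steps. You write the quadratic as $f(x_0)+f'(x_0)(x-x_0)+\alpha(x-x_0)^2$, reduce majorization to $\phi(x)\le\phi(t)$ for the divided difference $\phi$, and get monotonicity of $\phi$ from the integral Taylor remainder and the monotonicity of $f''$ on $[0,\infty)$.

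A standard alternative works with $h=g-f$ directly. Here $h(x_0)=h'(x_0)=h(t)=0$, and $h''(x)=2\alpha-\ell(\ell-1)x^{\ell-2}$ is nonincreasing, so $h'$ is concave. Rolle's theorem gives a second zero $\xi\in(x_0,t)$ of $h'$. Concavity then forces $h'\le 0$ on $[0,x_0]\cup[\xi,t]$ and $h'\ge 0$ on $[x_0,\xi]$, and $h\ge 0$ follows.

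Your route yields more than this alternative:
\begin{itemize}
\item Any tangent quadratic that majorizes $x^{\ell}$ on $[0,t]$ must have curvature at least $\phi(t)$ (evaluate at $x=t$). So your argument shows that \eqref{eq:lemma2_alpha} is the smallest admissible curvature, i.e., this is the tightest such majorizer.
\item It applies verbatim to any $C^2$ function whose second derivative is nondecreasing on the domain. This makes clear exactly where $x\ge 0$ and $\ell\ge 2$ are used.
\item It controls the sign of $\alpha$.
\end{itemize}
On the last point, you can sharpen $\alpha\ge 0$ to $\alpha>0$. For $s\in(0,1)$ the point $(1-s)x_0+st\ge st$ is strictly positive because $t>x_0\ge 0$, so the integrand defining $\phi(t)$ is strictly positive. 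The paper tacitly needs this strict version later, since \eqref{eq:PAPR_complete_square} divides by $\alpha_n^{(r)}$.
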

\textcolor{black}{The quadratic majorizer in \eqref{eq:lemma2_quad} has the same
value and the same first derivative as $f(x)=x^{\ell}$ at
$x=x_0$.}

At the $r$-th iteration, let $t_{\mathrm{P}}> \max_n p_n(\mathbf{u}^{(r)})$. Applying Lemma 2 to each term $\left(p_n(\mathbf{u})\right)^\ell$ over $[0,t_{\mathrm{P}}]$, we have
\begin{equation}
 \left( p_n(\mathbf{u}) \right)^{\ell}
 \le
 \alpha_n^{(r)} \left( p_n(\mathbf{u}) \right)^{2}
 + \beta_n^{(r)} p_n(\mathbf{u})
 + \Gamma_n^{(r)},
 \label{eq:PAPR_scalar_majorization}
\end{equation}
where
\begin{align}
 \alpha_n^{(r)}
 &=
 \frac{t_{\mathrm{P}}^{\ell}
 \!-\! \left(p_n\left(\mathbf{u}^{(r)}\right)\right)^{\ell}
 \!-\! \ell \left(p_n\left(\mathbf{u}^{(r)}\right)\right)^{\ell-1}
 \left(t_{\mathrm{P}}\!-\!p_n\left(\mathbf{u}^{(r)}\right)\right)}
 {\left(t_{\mathrm{P}}-p_n\left(\mathbf{u}^{(r)}\right)\right)^2},
 \nonumber\\ 
 \beta_n^{(r)}
 &= \ell \left(p_n\left(\mathbf{u}^{(r)}\right)\right)^{\ell-1}
 - 2 \alpha_n^{(r)} p_n\left(\mathbf{u}^{(r)}\right),
 \nonumber\\ 
 \Gamma_n^{(r)}
 &= \alpha_n^{(r)} \left(p_n\left(\mathbf{u}^{(r)}\right)\right)^2
 - (\ell-1)\left(p_n\left(\mathbf{u}^{(r)}\right)\right)^{\ell}.
\end{align}
In the numerical implementation, $t_{\mathrm{P}}$ is initialized as $1.1\max_n p_n(\mathbf{u}^{(r)})$. If the candidate update satisfies $\max_n p_n(\mathbf{u})>t_{\mathrm{P}}$, we set $t_{\mathrm{P}}\leftarrow1.1t_{\mathrm{P}}$ and recompute the current iteration until the candidate lies within $[0,t_{\mathrm{P}}]$.

Based on \eqref{eq:PAPR_scalar_majorization}, we have
\begin{align}
&\sum_{n=0}^{N L_{\mathrm{P}}-1}
\left(p_n(\mathbf{u})\right)^{\ell}
\nonumber\\
&\le
\sum_{n=0}^{N L_{\mathrm{P}}-1}
\left(
\alpha_n^{(r)}
\left(p_n(\mathbf{u})\right)^2
+
\beta_n^{(r)}p_n(\mathbf{u})
+
\Gamma_n^{(r)}
\right).
\label{eq:PAPR_moment_majorized}
\end{align}
Therefore, a sufficient surrogate condition for
\eqref{eq:PAPR_moment_constraint} is
\begin{equation}
\sum_{n=0}^{N L_{\mathrm{P}}-1}
\left(
\alpha_n^{(r)}
\left(p_n(\mathbf{u})\right)^2
+
\beta_n^{(r)}p_n(\mathbf{u})
\right)
\le
E_1^{(r)},
\label{eq:PAPR_quad_constraint}
\end{equation}
where $E_1^{(r)}
=
\Gamma_{\ell}
-
\sum_{n=0}^{N L_{\mathrm{P}}-1}
\Gamma_n^{(r)}$.

Since $\|\mathbf{u}\|^{2}=E_{\mathrm{T}}$, each term on the left-hand side of \eqref{eq:PAPR_quad_constraint} can be rewritten as
\begin{align}
 \alpha_n^{(r)} \!\left( p_n(\mathbf{u}) \right)^{2}
 \!+\! \beta_n^{(r)} p_n(\mathbf{u}) 
 \!=& \alpha_n^{(r)}\!
 \left(
 \!p_n(\mathbf{u})
 \!+\! \frac{\beta_n^{(r)}}{2\alpha_n^{(r)}}\!
 \right)^{\!2}
 \!-\! \frac{\left(\beta_n^{(r)}\right)^{2}}{4\alpha_n^{(r)}}
 \nonumber\\ 
 =& \alpha_n^{(r)}
 \left(
 \mathbf{u}^{\mathsf{H}}
 \widetilde{\mathbf{G}}_n^{(r)}
 \mathbf{u}
 \right)^{\!2}
 - \frac{\left(\beta_n^{(r)}\right)^{2}}{4\alpha_n^{(r)}},
 \label{eq:PAPR_complete_square}
\end{align}
where $\widetilde{\mathbf{G}}_n^{(r)}
=
\mathbf{G}_n
+
\frac{\beta_n^{(r)}}
{2E_{\mathrm{T}}\alpha_n^{(r)}}\mathbf{I}_N$. Substituting \eqref{eq:PAPR_complete_square} into \eqref{eq:PAPR_quad_constraint} and collecting constant terms yield
\begin{equation}
 \sum_{n=0}^{N L_{\mathrm{P}}-1}
 \alpha_n^{(r)}
 \left(
 \mathbf{u}^{\mathsf{H}}
 \widetilde{\mathbf{G}}_n^{(r)}
 \mathbf{u}
 \right)^{\!2}
 \le
 E_{2}^{(r)},
 \label{eq:PAPR_quad_constraint_u}
\end{equation}
where $E_{2}^{(r)}
 =
 E_{1}^{(r)}
 +
 \sum_{n=0}^{N L_{\mathrm{P}}-1}
 \frac{\left(\beta_n^{(r)}\right)^{2}}{4\alpha_n^{(r)}}$.

Then, the left-hand side of \eqref{eq:PAPR_quad_constraint_u} can be majorized in a manner similar to the handling of the ISL terms in
\eqref{eq:lemma1_ineq}--\eqref{eq:ISL_Q0_form}, as follows:
\begin{align}
\sum_{n=0}^{N L_{\mathrm{P}}-1}
\alpha_n^{(r)}
\left(
\mathbf{u}^{\mathsf{H}}
\widetilde{\mathbf{G}}_n^{(r)}
\mathbf{u}
\right)^{\!2}
&=
\mathbf{z}^{\mathsf{H}}
\mathbf{L}^{(r)}
\mathbf{z}
\nonumber\\
&\le
2\Re\!\left\{
\left(\mathbf{z}^{(r)}\right)^{\mathsf{H}}
\mathbf{M}_{\mathrm{L}}^{(r)}
\mathbf{z}
\right\}
+
C_{0}^{(r)},
\label{eq:PAPR_vec_majorization}
\end{align}
where
\begin{align}
\mathbf{L}^{(r)}
&=
\sum_{n=0}^{N L_{\mathrm{P}}-1}
\alpha_n^{(r)}
\mathbf{w}_n^{(r)}
\left(\mathbf{w}_n^{(r)}\right)^{\mathsf{H}},
\nonumber\\
\mathbf{M}_{\mathrm{L}}^{(r)}
&=
\mathbf{L}^{(r)}
-
\lambda_{\mathrm{L}}^{(r)}
\mathbf{I}_{N^{2}},
\nonumber\\
C_{0}^{(r)}
&=
\lambda_{\mathrm{L}}^{(r)}
E_{\mathrm{T}}^{2}
-
\left(\mathbf{z}^{(r)}\right)^{\mathsf{H}}
\mathbf{M}_{\mathrm{L}}^{(r)}
\mathbf{z}^{(r)},
\end{align}
with $\mathbf{w}_n^{(r)}
=
\operatorname{vec}\!\left(
\left(
\widetilde{\mathbf{G}}_n^{(r)}
\right)^{\mathsf{H}}
\right)$ and
$\lambda_{\mathrm{L}}^{(r)}
\ge
\lambda_{\max}\!\left(\mathbf{L}^{(r)}\right)$.

Rewriting \eqref{eq:PAPR_vec_majorization} in terms of
$\mathbf{u}$, we have
\begin{align}
&2\Re\!\left\{
\left(\mathbf{z}^{(r)}\right)^{\mathsf{H}}
\mathbf{M}_{\mathrm{L}}^{(r)}
\mathbf{z}
\right\}
+
C_{0}^{(r)}
 =
\mathbf{u}^{\mathsf{H}}
\mathbf{Q}_{\mathrm{P},0}^{(r)}
\mathbf{u}
+
C_{1}^{(r)},
\label{eq:PAPR_vec_majorization2}
\end{align}
where
\begin{align}
\mathbf{Q}_{\mathrm{P},0}^{(r)}
&=
\frac{1}{2}
\left(
\widetilde{\mathbf{Q}}_{\mathrm{P}}^{(r)}
+
\left(
\widetilde{\mathbf{Q}}_{\mathrm{P}}^{(r)}
\right)^{\mathsf{H}}
\right),
\nonumber\\
\widetilde{\mathbf{Q}}_{\mathrm{P}}^{(r)}
&=
\sum_{n=0}^{N L_{\mathrm{P}}-1}
\left(
2\alpha_n^{(r)}
p_n\left(\mathbf{u}^{(r)}\right)
+
\beta_n^{(r)}
\right)
\mathbf{G}_n
\nonumber\\
&\quad
-
2\lambda_{\mathrm{L}}^{(r)}
\mathbf{u}^{(r)}
\left(\mathbf{u}^{(r)}\right)^{\mathsf{H}},
\nonumber\\
C_{1}^{(r)}
&=
C_{0}^{(r)}
+
\sum_{n=0}^{N L_{\mathrm{P}}-1}
\beta_n^{(r)}
\left(
p_n\left(\mathbf{u}^{(r)}\right)
+
\frac{\beta_n^{(r)}}{2\alpha_n^{(r)}}
\right).
\end{align}

Based on \eqref{eq:PAPR_vec_majorization} and \eqref{eq:PAPR_vec_majorization2}, the surrogate constraint \eqref{eq:PAPR_quad_constraint_u} can be replaced by
\begin{equation}
\mathbf{u}^{\mathsf{H}}
\mathbf{Q}_{\mathrm{P},0}^{(r)}
\mathbf{u}
\le
E_{3}^{(r)},
\label{eq:PAPR_quad_constraint_ub}
\end{equation}
where $E_{3}^{(r)}
=
E_{2}^{(r)}
-
C_{1}^{(r)}$.

Next, we incorporate this surrogate constraint into the objective function of \eqref{eq:unified_problem} via a quadratic penalty formulation. The PAPR-related surrogate penalty is written as
\begin{equation}
\widetilde{J}_{\mathrm{PAPR}}
\left(
\mathbf{u}\,|\,\mathbf{u}^{(r)}
\right)
=
\rho
\left|
\mathbf{u}^{\mathsf{H}}
\mathbf{Q}_{\mathrm{P},1}^{(r)}
\mathbf{u}
\right|^{2},
\label{eq:PAPR_surrogate_penalty}
\end{equation}
where $\rho\ge0$ denotes the penalty weight and
\begin{equation}
\mathbf{Q}_{\mathrm{P},1}^{(r)}
=
\mathbf{Q}_{\mathrm{P},0}^{(r)}
-
\frac{E_{3}^{(r)}}{E_{\mathrm{T}}}
\mathbf{I}_{N}.
\label{eq:QP1_def}
\end{equation}
Combining
\eqref{eq:PAPR_moment_majorized}--\eqref{eq:QP1_def} gives
\begin{equation}
\sum_{n=0}^{N L_{\mathrm{P}}-1}
\left(p_n(\mathbf{u})\right)^{\ell}
-
\Gamma_{\ell}
\le
\mathbf{u}^{\mathsf{H}}
\mathbf{Q}_{\mathrm{P},1}^{(r)}
\mathbf{u},
\label{eq:PAPR_residual_majorization}
\end{equation}
for any $\mathbf{u}$ satisfying $0\le p_n(\mathbf{u})\le t_{\mathrm{P}}$ for all $n$. \textcolor{black}{Equality holds at $\mathbf{u}=\mathbf{u}^{(r)}$, and the two sides have the same first-order directional derivative at this point.}

Finally, by re-applying \textit{Lemma~\ref{lemma1}} twice in similar manner as for the ISL term in \eqref{eq:ISL_vec_form}-\eqref{eq:ISL_surrogate_linear}, the penalty term in \eqref{eq:PAPR_surrogate_penalty} can be further majorized by the following linear surrogate
\begin{equation}
 \widehat{J}_{\mathrm{PAPR}}(\mathbf{u}\,|\,\mathbf{u}^{(r)})
 =
 2\rho\Re\!\left\{
 \left(\mathbf{c}^{(r)}\right)^{\mathsf{H}}\mathbf{u}
 \right\},
 \label{eq:PAPR_surrogate_linear}
\end{equation}
where
\begin{align*}
 \mathbf{c}^{(r)}
 &=
 \left(\mathbf{M}_{\mathrm{P},2}^{(r)}\right)^{\mathsf{H}}\mathbf{u}^{(r)}, \quad 
 \mathbf{M}_{\mathrm{P},2}^{(r)}
 =
 \mathbf{Q}_{\mathrm{P},2}^{(r)}
 - \lambda_{\mathrm{P},3}^{(r)}\mathbf{I}_{N},
 \\
 \lambda_{\mathrm{P},3}^{(r)}
 &\ge
 \lambda_{\max}\left(\mathbf{Q}_{\mathrm{P},2}^{(r)}\right),\quad 
 \mathbf{Q}_{\mathrm{P},2}^{(r)}
 =
 \frac{1}{2}\left(
 \widetilde{\mathbf{Q}}_{\mathrm{P},2}^{(r)}
 +
 \left(\widetilde{\mathbf{Q}}_{\mathrm{P},2}^{(r)}\right)^{\mathsf{H}}
 \right), \\
 \widetilde{\mathbf{Q}}_{\mathrm{P},2}^{(r)}
 &=
 2 \left(\mathbf{u}^{(r)}\right)^{\mathsf{H}}
 \mathbf{Q}_{\mathrm{P},1}^{(r)}
 \mathbf{u}^{(r)}
 \mathbf{Q}_{\mathrm{P},1}^{(r)}
 - 2 \lambda_{\mathrm{P},2}^{(r)}
 \mathbf{u}^{(r)}{\mathbf{u}^{(r)}}^{\mathsf{H}},\\
 \lambda_{\mathrm{P},2}^{(r)}
 &\ge \lambda_{\max}\left(\mathbf{L}_{\mathrm{P}}^{(r)}\right),\quad
 \mathbf{L}_{\mathrm{P}}^{(r)}
 =
 \mathbf{w}_{\mathrm{P}}^{(r)}
 \left(\mathbf{w}_{\mathrm{P}}^{(r)}\right)^{\mathsf{H}},\\
 \mathbf{w}_{\mathrm{P}}^{(r)}
 &=
 \operatorname{vec}\!\left(
 (\mathbf{Q}_{\mathrm{P},1}^{(r)})^{\mathsf{H}}
 \right).
\end{align*}
Based on the penalty term in \eqref{eq:PAPR_surrogate_linear}, an appropriate choice of the weight $\rho$ enables effective PAPR control within the JIPD-MM update.

\subsection{Discrete-Phase Constraint Handling}
\label{subsec:NSP_discrete}

In this subsection, we address the discrete-phase constraints $u[m]\in\mathcal{U}_m$ on the DCSs $m\in\mathcal{D}$ using a two-stage implementation. Inspired by the scheme for one-bit precoding in~\cite{shao2019}, the initialization stage employs a convex-hull relaxation, an NSP term, and the associated projection/normalization update to gradually steer the relaxed DCS variables toward the discrete-phase set and provide a favorable starting point for the subsequent main optimization stage. For the RCS-only designs, the initialization stage is skipped.

At the $r$-th iteration, the ISL and PAPR treatments in Sections~\ref{subsec:ISL_surrogate} and \ref{subsec:PAPR_surrogate} provide the linear surrogates $\widehat{J}_{\mathrm{ISL}}
(\mathbf{u}\,|\,\mathbf{u}^{(r)})$ in
\eqref{eq:ISL_surrogate_linear} and
$\widehat{J}_{\mathrm{PAPR}}
(\mathbf{u}\,|\,\mathbf{u}^{(r)})$ in \eqref{eq:PAPR_surrogate_linear}, respectively. The resulting discrete linear update is
\begin{equation}
\begin{aligned}
 \min_{\mathbf{u}}\quad &
 2\Re\!\left\{
 \left(\mathbf{d}^{(r)}+\rho\mathbf{c}^{(r)}\right)^{\mathsf{H}}
 \mathbf{u}
 \right\}
 \\
 \text{s.t.}\quad &
 u[m]\in\mathcal{U}_m,\quad m\in\mathcal{D},\\
 & \|\mathbf{u}\|^{2}=E_{\mathrm{T}}.
\end{aligned}
\label{eq:MM_update_discrete}
\end{equation}

\textit{Initialization stage:}
In the initialization stage, each discrete alphabet
$\mathcal{U}_m$ is relaxed to its convex hull
$\overline{\mathcal{U}}_m
=\operatorname{conv}(\mathcal{U}_m)$, and the following NSP
term is introduced:
\begin{equation}
\begin{aligned}
 \min_{\mathbf{u}}\quad &
 2\Re\!\left\{
 \left(\mathbf{d}^{(r)}+\rho\mathbf{c}^{(r)}\right)^{\mathsf{H}}
 \mathbf{u}
 \right\}
 -\omega\sum_{m\in\mathcal{D}}|u[m]|^{2}
 \\
 \text{s.t.}\quad &
 u[m]\in\overline{\mathcal{U}}_m,\quad m\in\mathcal{D},\\
 & \|\mathbf{u}\|^{2}=E_{\mathrm{T}},
\end{aligned}
\label{eq:MM_update_NSP_relaxed}
\end{equation}
where $\omega>0$ is the NSP weight.

Each $\overline{\mathcal{U}}_m$ is a convex polygon in the complex plane whose vertices coincide with $\mathcal{U}_m$. The concave quadratic term $-\omega\sum_{m\in\mathcal{D}}|u[m]|^{2}$ encourages the solution to approach the vertices of these polygons. Its first-order upper bound at $\mathbf{u}^{(r)}$ is
\begin{equation}
-\omega\sum_{m\in\mathcal{D}}|u[m]|^{2}
\le
2\omega\Re\!\left\{
\left(\mathbf{g}_{\mathrm{NSP}}^{(r)}\right)^{\mathsf{H}}
\mathbf{u}
\right\}
+\mathrm{const}_{\mathrm{NSP}}^{(r)},
\label{eq:NSP_linear_majorant}
\end{equation}
where
$\mathrm{const}_{\mathrm{NSP}}^{(r)}
$ is the constant term and
\begin{equation}
g_{\mathrm{NSP}}^{(r)}[m]
=
\begin{cases}
-\,u^{(r)}[m], & m\in\mathcal{D},\\[0.2em]
0, & m\in\mathcal{R}.
\end{cases}
\label{eq:g_NSP_def}
\end{equation}
The bound in \eqref{eq:NSP_linear_majorant} is tight at $\mathbf{u}=\mathbf{u}^{(r)}$.

Combining \eqref{eq:MM_update_NSP_relaxed} and \eqref{eq:NSP_linear_majorant}, and ignoring terms independent
of $\mathbf{u}$, the overall NSP-relaxed update at $r$-th iteration reduces to
\begin{equation}
\begin{aligned}
\min_{\mathbf{u}}\quad &
2\Re\!\left\{
\left(\tilde{\mathbf{g}}^{(r)}\right)^{\mathsf{H}}\mathbf{u}
\right\}\\
\text{s.t.}\quad &
u[m]\in\overline{\mathcal{U}}_m,\quad m\in\mathcal{D},\\
& \|\mathbf{u}\|^{2}=E_{\mathrm{T}},
\end{aligned}
\label{eq:MM_NSP_linear_prob}
\end{equation}
where $\tilde{\mathbf{g}}^{(r)}
=
\mathbf{d}^{(r)}
+\rho\mathbf{c}^{(r)}
+\omega\mathbf{g}_{\mathrm{NSP}}^{(r)}$. By completing the square,
\eqref{eq:MM_NSP_linear_prob} is equivalent to
\begin{equation}
\begin{aligned}
\min_{\mathbf{u}}\quad &
\left\|
\mathbf{u}-\breve{\mathbf{u}}^{(r)}
\right\|^{2}\\
\text{s.t.}\quad &
u[m]\in\overline{\mathcal{U}}_m,\quad m\in\mathcal{D},\\
& \|\mathbf{u}\|^{2}=E_{\mathrm{T}},
\end{aligned}
\label{eq:prox_subproblem}
\end{equation}
where
$\breve{\mathbf{u}}^{(r)}
=-\tilde{\mathbf{g}}^{(r)}$.

The total-energy constraint couples the DCS and RCS blocks in \eqref{eq:prox_subproblem}. In the initialization stage, we use
the following block-wise projection/normalization update. For each $m\in\mathcal{D}$, define
\[
\tilde{u}^{(r)}[m]
=
\Pi_{\overline{\mathcal{U}}_m}
\left(\breve{u}^{(r)}[m]\right).
\]
The update is
\begin{equation}
u^{(r+1)}[m]
=
\begin{cases}
\tilde{u}^{(r)}[m], & m\in\mathcal{D},\\[0.3em]
\varepsilon^{(r)}\breve{u}^{(r)}[m],
& m\in\mathcal{R},
\end{cases}
\label{eq:u_update_projection}
\end{equation}
where
\begin{equation}
\varepsilon^{(r)}
=
\sqrt{
\frac{
E_{\mathrm{T}}
-\displaystyle\sum_{m\in\mathcal{D}}
|\tilde{u}^{(r)}[m]|^{2}
}{
\displaystyle\sum_{m\in\mathcal{R}}
|\breve{u}^{(r)}[m]|^{2}
}
}.
\label{eq:eta_def}
\end{equation}
Fig.~\ref{fig:NSP_projection} illustrates the projection step for a single DCS. For a given index $m\in\mathcal{D}$, the red dots on the unit circle represent a discrete eight-point example alphabet $\mathcal{U}_m$, and the black polygon is its convex hull $\overline{\mathcal{U}}_m$. The black circles denote several possible pre-projection values $\breve{u}^{(r)}[m]$ produced by the MM linear surrogate, whereas the black crosses indicate the projected values $\tilde{u}^{(r)}[m] = \Pi_{\overline{\mathcal{U}}_m}\left(\breve{u}^{(r)}[m]\right)$. If $\breve{u}^{(r)}[m]\notin\overline{\mathcal{U}}_m$, the projection lands on an edge or a vertex of the polygon, while if $\breve{u}^{(r)}[m]\in\overline{\mathcal{U}}_m$, the projection coincides with the pre-projection point itself. The shaded sector and dashed rays highlight how projections in a representative angular region are treated. The same convex-hull projection principle applies to any discrete alphabet $\mathcal{U}_m$. For the example in Fig.~\ref{fig:NSP_projection} and in our simulations, we use an octagonal feasible set of the form $\mathcal{U}_m = \left\{ x[m]e^{\jmath \varphi_\ell}: \varphi_\ell = \varphi_0 + \delta + \ell \tfrac{\pi}{4}, \ \ell = 0,\ldots,7 \right\}$, where $\varphi_0$ is a fixed offset and $|\delta|\ll 1$ is a very small irrational rotation angle. For $m\neq0$, this feasible set corresponds to the pre-chirp-parameter alphabet $\mathcal{C}_{2,m} = \left\{ c_{2,m}^{(\ell)} : c_{2,m}^{(\ell)} = \frac{\varphi_0 + \delta + \ell \pi/4}{2\pi m^2}, \ \ell = 0,\ldots,7 \right\}$. By choosing $\delta$ appropriately, all $c_{2,m}^{(\ell)}$ can be made irrational, which is consistent with the AFDM requirement on the pre-chirp parameter in~\cite{AFDMOR}.

\begin{figure}[t]
\centering
\includegraphics[width=0.78\columnwidth]{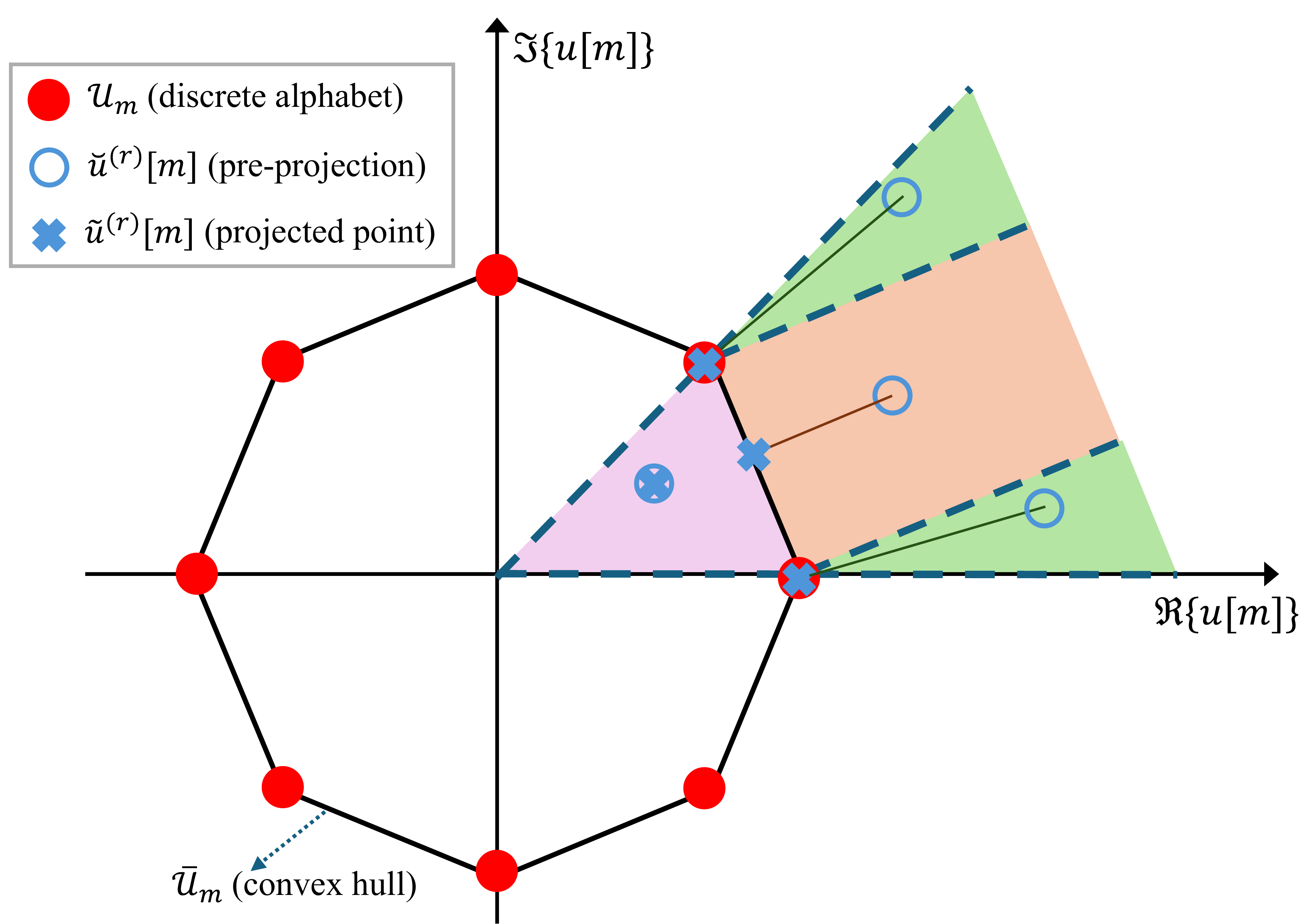}
\caption{Illustration of the convex-hull relaxation and per-subcarrier projection in the initialization stage for a DCS $m$ with discrete alphabet $\mathcal{U}_m$.}
\vspace{-0.3cm}
\label{fig:NSP_projection}
\end{figure}

The initialization stage is terminated after at most $r_{\mathrm{I}}^{\max}$ iterations. In the implementation of the joint AF shaping and PAPR control mode, it is terminated earlier if the PAPR constraint is satisfied for three consecutive iterations. The actual number of initialization iterations is denoted by $r_{\mathrm{I}}\le r_{\mathrm{I}}^{\max}$.

\textit{Main optimization stage:}
In this stage, the convex-hull relaxation and the NSP term are removed, and the original discrete MM subproblem in
\eqref{eq:MM_update_discrete} is solved exactly.

For each optimized DCS,
\begin{equation}
u^{(r+1)}[m]
=
\arg\min_{v\in\mathcal{U}_m}
\Re\left\{
\left(
d^{(r)}[m]+\rho c^{(r)}[m]
\right)^*v
\right\},
\label{eq:exact_DCS_update}
\end{equation}
for $m\in\mathcal{D}$. Since the DCS entries $u[m]$, $m\in\mathcal{D}$, have fixed moduli, the RCS entries are jointly updated under the remaining-energy constraint as
\begin{equation}
u^{(r+1)}[m]
\!=
\!-\!
\sqrt{
E_{\mathrm{T}}
\!-\!
\sum_{k\in\mathcal{D}}|x[k]|^2
}
\,
\frac{
d^{(r)}[m]\!+\!\rho c^{(r)}[m]
}{
\sqrt{
\displaystyle
\sum_{k\in\mathcal{R}}
\left|
d^{(r)}[k]\!+\!\rho c^{(r)}[k]
\right|^2
}
},
\label{eq:exact_RCS_update}
\end{equation}
for $m\in\mathcal{R}$. For the RCS-only designs, the DCS entries remain fixed and only the RCS update in \eqref{eq:exact_RCS_update} is required.

\subsection{Overall JIPD-MM Iteration}
\label{subsec:JIPD_overall}

In the previous subsections, MM surrogates have been constructed for the ISL objective and the PAPR penalty in \eqref{eq:ISL_surrogate_linear} and \eqref{eq:PAPR_surrogate_linear}, respectively. The discrete-phase constraints are handled by the initialization and main optimization stages described in
Section~\ref{subsec:NSP_discrete}. In this subsection, we assemble these components into the overall JIPD-MM algorithm and discuss convergence and per-iteration computational complexity.

\subsubsection{JIPD-MM Algorithm}

The overall JIPD-MM procedure is summarized in \textbf{Algorithm~\ref{alg:JIPD-MM}}. When the per-subcarrier pre-chirp parameters on the DCSs are optimized, the initialization stage applies the projection/normalization update. The resulting point is then mapped once to the discrete feasible set, after which the
main optimization stage applies \eqref{eq:exact_DCS_update} and \eqref{eq:exact_RCS_update} until termination. For the RCS-only designs, the initialization stage is skipped. In the numerical implementation of the initialization stage, $\rho$ is adjusted according to the achieved PAPR, while the NSP weight $\omega$ is gradually increased. After the transition, $\rho$ is kept fixed during the main optimization stage.

\begin{algorithm}[t]
 \caption{JIPD-MM AFDM waveform design}
 \label{alg:JIPD-MM}
 \begin{algorithmic}[1]
\Statex \textbf{Input:}
$\mathbf{u}^{(0)}$, $\mathcal{D},\mathcal{R}, \{\mathcal{U}_m\}$, $E_{\mathrm{T}}$, $N$, $\Gamma$, $r_{\mathrm{I}}^{\max}$, $r_{\max}$

\Statex Set $r_{\mathrm{I}}=0$ for the RCS-only design; otherwise, set $r_{\mathrm{I}}=r_{\mathrm{I}}^{\max}$

\Statex \textit{Initialization stage (if $c_{2,m}$, $m \in \mathcal{D}$ are optimized)}
\For{$r=0,1,\ldots,r_{\mathrm{I}}^{\max}-1$}
     \State Form $\mathbf{g}_{\mathrm{cont}}^{(r)}
     =
     \mathbf{d}^{(r)}+\rho\mathbf{c}^{(r)}$ via the surrogates in \eqref{eq:ISL_surrogate_linear} and \eqref{eq:PAPR_surrogate_linear}, and form
     $\mathbf{g}_{\mathrm{NSP}}^{(r)}$ according to \eqref{eq:g_NSP_def}
     
     \State
     $\breve{\mathbf{u}}^{(r)}
     =
     -\mathbf{g}_{\mathrm{cont}}^{(r)}
     -\omega\mathbf{g}_{\mathrm{NSP}}^{(r)}$
     
     \State Update $\mathbf{u}^{(r+1)}$ from $\breve{\mathbf{u}}^{(r)}$ according to the projection and normalization rule in \eqref{eq:u_update_projection}--\eqref{eq:eta_def}
     \If{the transition condition is satisfied}
    \State Set $r_{\mathrm{I}}=r+1$ and \textbf{break}
\EndIf
 \State \textbf{end}
 \EndFor
 \State \textbf{end}

 \State Quantize each $u^{(r_{\mathrm{I}})}[m]$, $m\in\mathcal{D}$, to its nearest point in $\mathcal{U}_m$, and normalize the RCS entries accordingly

 \Statex \textit{Main optimization stage}
 \For{$r=r_{\mathrm{I}},\ldots,r_{\max}-1$}
     \State Form $\mathbf{d}^{(r)}$ and $\mathbf{c}^{(r)}$ via the ISL and PAPR surrogates in \eqref{eq:ISL_surrogate_linear} and \eqref{eq:PAPR_surrogate_linear}
     
     \State Update $\{u^{(r+1)}[m]\}_{m\in\mathcal{D}}$ according to \eqref{eq:exact_DCS_update}, if applicable

\State Update $\{u^{(r+1)}[m]\}_{m\in\mathcal{R}}$ according to \eqref{eq:exact_RCS_update}
     
     \If{the stopping criterion is met}
         \State \textbf{break}
     \EndIf
      \State \textbf{end}
 \EndFor
 \State \textbf{end}
 \Statex \textbf{Output:} Optimized AFDM waveform
 \end{algorithmic}
\end{algorithm}


\noindent\textbf{Remark 1:} \textbf{Algorithm~\ref{alg:JIPD-MM}} is presented for the joint AF shaping and PAPR control mode. The AF shaping mode is obtained by omitting the PAPR component and skipping the PAPR-related steps. The PAPR minimization mode is obtained by omitting the ISL component, setting $\Gamma=0$~dB, and keeping only the PAPR-related update. For the RCS-only designs, the DCS entries remain unchanged and the initialization stage is skipped.

{\color{black}
\subsubsection{Convergence Analysis}

The following monotonicity analysis concerns the two single-objective modes. When the per-subcarrier pre-chirp parameters are optimized, the initialization stage is performed for a finite number of iterations to provide a starting point for the subsequent main optimization stage; the RCS-only designs skip this stage. We therefore analyze the main-optimization-stage iterations. Both modes involve a single design criterion, and hence no trade-off weight affects the corresponding update. In the joint mode, the ISL and PAPR terms are optimized together and need not decrease individually; its averaged iteration behavior is examined numerically in Section~\ref{subsec:iteration_behavior}.

\begin{proposition}
\label{prop:convergence}
During the main optimization stage, the ISL objective in the AF shaping mode is nonincreasing and convergent. In the PAPR minimization mode, define
\begin{equation}
\mathrm{PAPR}_{\ell}(\mathbf{u})
\triangleq
\frac{N}{E_{\mathrm{T}}}
\left(
\sum_{n=0}^{N L_{\mathrm{P}}-1}
\left(p_n(\mathbf{u})\right)^{\ell}
\right)^{1/\ell}.
\label{eq:smooth_PAPR_def}
\end{equation}
Then, $\mathrm{PAPR}_{\ell}(\mathbf{u})$ upper-bounds $\mathrm{PAPR}_{\mathbf{u}}$ and satisfies $\lim_{\ell\rightarrow\infty}
\mathrm{PAPR}_{\ell}(\mathbf{u})
=
\mathrm{PAPR}_{\mathbf{u}}$. Moreover, $\mathrm{PAPR}_{\ell}(\mathbf{u}^{(r)})$ is nonincreasing and convergent.
\end{proposition}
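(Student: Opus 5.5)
The argument is the standard MM descent chain. The point to check is that the main-stage update \eqref{eq:exact_DCS_update}--\eqref{eq:exact_RCS_update} exactly minimizes the final linear surrogate over the feasible set.

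\emph{AF shaping mode ($\rho=0$).} I would assemble the full majorizer of $J_{\mathrm{ISL}}$ at $\mathbf{u}^{(r)}$, that is, \eqref{eq:ISL_surrogate_linear} plus all omitted constants. Two applications of \textit{Lemma~\ref{lemma1}} give $J_{\mathrm{ISL}}(\mathbf{u})\le \widehat{J}_{\mathrm{ISL}}(\mathbf{u}\,|\,\mathbf{u}^{(r)})+\mathrm{const}^{(r)}$ for all feasible $\mathbf{u}$, with equality at $\mathbf{u}^{(r)}$. One caveat: the constants in both majorization steps use $\|\mathbf{z}\|^2=E_{\mathrm{T}}^2$ and $\|\mathbf{u}\|^2=E_{\mathrm{T}}$. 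These hold for every iterate because $|u[m]|=|x[m]|$ on $\mathcal{D}$ and \eqref{eq:exact_RCS_update} restores the remaining energy exactly.

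Next I would show that \eqref{eq:exact_DCS_update}--\eqref{eq:exact_RCS_update} solve \eqref{eq:MM_update_discrete} exactly. The linear objective separates across entries. The DCS moduli are fixed, so the DCS energy $\sum_{k\in\mathcal{D}}|x[k]|^2$ is constant and the energy constraint acts only on the RCS block. Each DCS term is then minimized by enumerating $\mathcal{U}_m$. The RCS block reduces to minimizing $\Re\{\mathbf{g}_{\mathcal{R}}^{\mathsf{H}}\mathbf{u}_{\mathcal{R}}\}$ on a sphere, and by Cauchy--Schwarz the minimizer is the scaled negative gradient. Since $\mathbf{u}^{(r)}$ is itself feasible, the sandwich
\[
J_{\mathrm{ISL}}(\mathbf{u}^{(r+1)})\le \widehat{J}_{\mathrm{ISL}}(\mathbf{u}^{(r+1)}|\mathbf{u}^{(r)})+\mathrm{const}^{(r)}\le \widehat{J}_{\mathrm{ISL}}(\mathbf{u}^{(r)}|\mathbf{u}^{(r)})+\mathrm{const}^{(r)}=J_{\mathrm{ISL}}(\mathbf{u}^{(r)})
\]
gives monotonicity. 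Because $J_{\mathrm{ISL}}\ge0$, monotone convergence gives convergence.

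\emph{PAPR minimization mode.} The bound-and-limit claim is elementary. Let $p_{\max}=\max_n p_n$. Then $p_{\max}^\ell\le\sum_n p_n^\ell\le NL_{\mathrm{P}}\,p_{\max}^\ell$, so after taking $\ell$-th roots and multiplying by $N/E_{\mathrm{T}}$,
\[
\mathrm{PAPR}_{\mathbf{u}}\le \mathrm{PAPR}_\ell(\mathbf{u})\le (NL_{\mathrm{P}})^{1/\ell}\mathrm{PAPR}_{\mathbf{u}},
\]
which tends to $\mathrm{PAPR}_{\mathbf{u}}$ as $\ell\to\infty$, using \eqref{eq:PAPR_u_def}. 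For monotonicity, note that $x\mapsto x^{1/\ell}$ is increasing, so it suffices to show that $S(\mathbf{u})=\sum_n p_n(\mathbf{u})^\ell$ is nonincreasing.

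In this mode only the PAPR component is kept, and I would interpret the update as minimizing a majorizer of $S$ itself. The chain runs through \eqref{eq:PAPR_scalar_majorization} (\textit{Lemma~\ref{lemma:PAPR_power}}, valid on $[0,t_{\mathrm{P}}]$), then \eqref{eq:PAPR_complete_square}, then \eqref{eq:PAPR_vec_majorization}--\eqref{eq:PAPR_vec_majorization2}, and finally one more application of \textit{Lemma~\ref{lemma1}} to the quadratic $\mathbf{u}^{\mathsf{H}}\mathbf{Q}_{\mathrm{P},0}^{(r)}\mathbf{u}$. This produces a linear majorizer of $S$ that is tight at $\mathbf{u}^{(r)}$ and minimized exactly by the same closed-form update. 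The same sandwich then gives $S(\mathbf{u}^{(r+1)})\le S(\mathbf{u}^{(r)})$, and since $S\ge0$ it converges.

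The key technical obstacle is the interval restriction in \textit{Lemma~\ref{lemma:PAPR_power}}: the majorization holds only when $p_n(\mathbf{u}^{(r+1)})\le t_{\mathrm{P}}$. I would argue that the enlargement rule $t_{\mathrm{P}}\leftarrow1.1t_{\mathrm{P}}$ terminates after finitely many steps. Every feasible $\mathbf{u}$ satisfies $p_n(\mathbf{u})\le\|\boldsymbol{\Phi}^{(\mathrm{P})}\|^2E_{\mathrm{T}}$, so a bounded number of enlargements suffices. Every accepted iterate therefore satisfies the interval condition, and the surrogate inequalities are valid at $\mathbf{u}^{(r+1)}$. A subsidiary concern is the squared penalty \eqref{eq:PAPR_surrogate_penalty} with $\Gamma=0$~dB. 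I would avoid relying on it and state the descent property for the majorizer chain of $S$ described above.
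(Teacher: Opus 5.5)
Your treatment of the AF shaping mode (the MM sandwich, exactness of \eqref{eq:exact_DCS_update}--\eqref{eq:exact_RCS_update} on the fixed-energy set) and your bound/limit argument for $\mathrm{PAPR}_\ell$ are fine. The AF shaping part matches the paper. The monotonicity argument in the PAPR minimization mode, however, has a genuine gap.

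In that mode the algorithm's iterate comes from feeding $\rho\mathbf{c}^{(r)}$ into \eqref{eq:exact_DCS_update}--\eqref{eq:exact_RCS_update}. Here $\mathbf{c}^{(r)}$, from \eqref{eq:PAPR_surrogate_linear}, is the linearization of the squared penalty $\rho|\mathbf{u}^{\mathsf{H}}\mathbf{Q}_{\mathrm{P},1}^{(r)}\mathbf{u}|^2$. You explicitly set that penalty aside. Instead you linearize $\mathbf{u}^{\mathsf{H}}\mathbf{Q}_{\mathrm{P},0}^{(r)}\mathbf{u}$, whose gradient is $(\mathbf{Q}_{\mathrm{P},0}^{(r)}-\lambda\mathbf{I}_N)\mathbf{u}^{(r)}$ with $\lambda\ge\lambda_{\max}(\mathbf{Q}_{\mathrm{P},0}^{(r)})$. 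Your claim that this majorizer is ``minimized exactly by the same closed-form update'' is therefore unsupported.

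A direct computation gives $\mathbf{c}^{(r)}=2\gamma\,\mathbf{Q}_{\mathrm{P},0}^{(r)}\mathbf{u}^{(r)}-\kappa\,\mathbf{u}^{(r)}$, where $\gamma=(\mathbf{u}^{(r)})^{\mathsf{H}}\mathbf{Q}_{\mathrm{P},1}^{(r)}\mathbf{u}^{(r)}=S(\mathbf{u}^{(r)})-\Gamma_\ell$ and $\kappa=2\gamma E_3^{(r)}/E_{\mathrm{T}}+2\lambda_{\mathrm{P},2}^{(r)}E_{\mathrm{T}}+\lambda_{\mathrm{P},3}^{(r)}$. Whether this is a positive multiple of an admissible gradient of your form hinges on the sign of $\gamma$, which you never establish. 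As written, you prove descent for a modified iteration, not the one the proposition is about. The ``subsidiary concern'' you sidestep is actually the crux: a squared residual penalizes deviations of either sign, so minimizing it could just as well push $S(\mathbf{u})=\sum_n(p_n(\mathbf{u}))^\ell$ upward toward $\Gamma_\ell$.

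The missing ingredient is the Jensen step the paper uses. With $\Gamma=0$~dB, $\Gamma_{\mathrm{p}}=E_{\mathrm{T}}/N$ is exactly the average sample power. Convexity of $x^\ell$ then gives $S(\mathbf{u})\ge NL_{\mathrm{P}}\Gamma_{\mathrm{p}}^{\ell}\ge\Gamma_\ell$ for every feasible $\mathbf{u}$. Combined with \eqref{eq:PAPR_residual_majorization}, this gives $0\le S(\mathbf{u})-\Gamma_\ell\le\mathbf{u}^{\mathsf{H}}\mathbf{Q}_{\mathrm{P},1}^{(r)}\mathbf{u}$ whenever $p_n(\mathbf{u})\le t_{\mathrm{P}}$. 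Squaring therefore preserves order, so the linear surrogate \eqref{eq:PAPR_surrogate_linear} plus constants majorizes $\rho(S(\mathbf{u})-\Gamma_\ell)^2$, tightly at $\mathbf{u}^{(r)}$. Your sandwich then yields $(S(\mathbf{u}^{(r+1)})-\Gamma_\ell)^2\le(S(\mathbf{u}^{(r)})-\Gamma_\ell)^2$. Nonnegativity of both residuals turns this into $S(\mathbf{u}^{(r+1)})\le S(\mathbf{u}^{(r)})$, which is the paper's argument.

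If you prefer to keep your own route, it can be repaired as follows:
\begin{itemize}
\item Jensen gives $\gamma>0$, so $\mathbf{c}^{(r)}=2\gamma(\mathbf{Q}_{\mathrm{P},0}^{(r)}-\lambda'\mathbf{I}_N)\mathbf{u}^{(r)}$ with $\lambda'=\kappa/(2\gamma)$.
\item Weyl's inequality applied to $2\gamma\mathbf{Q}_{\mathrm{P},1}^{(r)}=\mathbf{Q}_{\mathrm{P},2}^{(r)}+2\lambda_{\mathrm{P},2}^{(r)}\mathbf{u}^{(r)}(\mathbf{u}^{(r)})^{\mathsf{H}}$ shows $\lambda'\ge\lambda_{\max}(\mathbf{Q}_{\mathrm{P},0}^{(r)})$.
\item Since \eqref{eq:exact_DCS_update}--\eqref{eq:exact_RCS_update} are invariant to positive scaling of the gradient, the algorithm's iterate coincides with your MM step using $\lambda=\lambda'$.
\end{itemize}
Either way, the nonnegativity of the residual $S(\mathbf{u})-\Gamma_\ell$ is the indispensable fact, and your proposal omits it.
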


\begin{IEEEproof}
For the AF shaping mode, let $\kappa_{\mathrm{ISL}}^{(r)}$
denote the terms independent of $\mathbf{u}$ that are omitted
in \eqref{eq:ISL_surrogate_linear}. Then,
$\widehat{J}_{\mathrm{ISL}}
(\mathbf{u}\mid\mathbf{u}^{(r)})
+\kappa_{\mathrm{ISL}}^{(r)}$
majorizes $J_{\mathrm{ISL}}(\mathbf{u})$ and satisfies equality at
$\mathbf{u}=\mathbf{u}^{(r)}$. The updates in \eqref{eq:exact_DCS_update} and \eqref{eq:exact_RCS_update} minimize the current linear surrogate over the feasible set; for the RCS-only designs, the DCS entries are fixed and \eqref{eq:exact_RCS_update} gives the corresponding minimizer over the RCS block. Hence,
\begin{align}
J_{\mathrm{ISL}}(\mathbf{u}^{(r+1)})
&\le
\widehat{J}_{\mathrm{ISL}}
(\mathbf{u}^{(r+1)}\mid\mathbf{u}^{(r)})
+\kappa_{\mathrm{ISL}}^{(r)}
\nonumber\\
&\le
\widehat{J}_{\mathrm{ISL}}
(\mathbf{u}^{(r)}\mid\mathbf{u}^{(r)})
+\kappa_{\mathrm{ISL}}^{(r)}
=
J_{\mathrm{ISL}}(\mathbf{u}^{(r)}).
\label{eq:ISL_descent}
\end{align}
Since $J_{\mathrm{ISL}}(\mathbf{u})\ge0$, its values along the main-optimization-stage iterations converge.

For the PAPR minimization mode, $\Gamma=0$~dB gives $\Gamma_{\mathrm{p}}=E_{\mathrm{T}}/N$. Since the average sample power satisfies
$\frac{1}{NL_{\mathrm{P}}}\sum_{n=0}^{NL_{\mathrm{P}}-1}
p_n(\mathbf{u})=\Gamma_{\mathrm{p}}$,
Jensen's inequality gives
$\sum_{n=0}^{NL_{\mathrm{P}}-1}(p_n(\mathbf{u}))^\ell
\ge NL_{\mathrm{P}}\Gamma_{\mathrm{p}}^\ell
\ge \Gamma_{\mathrm{p}}^\ell=\Gamma_\ell$. Under the update rule for $t_{\mathrm{P}}$ described after \eqref{eq:PAPR_scalar_majorization}, the majorizations leading to \eqref{eq:PAPR_surrogate_linear} provide a tight upper bound on the squared residual associated with \eqref{eq:PAPR_moment_constraint}. Since the main-optimization-stage update minimizes the current linear surrogate over the original feasible set,
\begin{align}
&
\left(
\sum_{n=0}^{N L_{\mathrm{P}}-1}
\left(p_n(\mathbf{u}^{(r+1)})\right)^{\ell}
-\Gamma_{\ell}
\right)^2
\nonumber\\
&\quad\le
\left(
\sum_{n=0}^{N L_{\mathrm{P}}-1}
\left(p_n(\mathbf{u}^{(r)})\right)^{\ell}
-\Gamma_{\ell}
\right)^2.
\label{eq:PAPR_descent}
\end{align}
Both residuals in \eqref{eq:PAPR_descent} are nonnegative. It then follows from \eqref{eq:smooth_PAPR_def} that
\[
\mathrm{PAPR}_{\ell}(\mathbf{u}^{(r+1)})
\le
\mathrm{PAPR}_{\ell}(\mathbf{u}^{(r)}).
\]
Since $\mathrm{PAPR}_{\ell}(\mathbf{u})$ is lower bounded, its values along the main-optimization-stage iterations converge.
\end{IEEEproof}
}

\subsubsection{Computational Complexity of the Proposed JIPD-MM Algorithm}

\textcolor{black}{Under the unified design-vector formulation, the
DCS entries are defined as $u[m] = x[m]e^{j2\pi c_{2,m}m^2}$.
Therefore, the system-dependent matrices
$\mathbf{C}_{\tau,\mu_q}$, $\mathbf{G}_n$, $\boldsymbol{\Phi}$,
and $\boldsymbol{\Phi}^{(\mathrm{P})}$ are fixed for a given
system configuration and do not need to be recomputed in the iterations.} Let $|\mathcal{A}|$ denote the number of delay--Doppler sampling points in the LAZ. For the ISL surrogate, each iteration requires evaluating
$\zeta^{(r)}_{\tau,\mu_q}$ and the corresponding products
needed to form $\mathbf{d}^{(r)}$ for all
$(\tau,\mu_q)\in\mathcal{A}$, which leads to a complexity of
order $\mathcal{O}(|\mathcal{A}|N\log(N))$ with the help of
fast Fourier transform (FFT). For the PAPR surrogate, generating the oversampled waveform and computing the sample powers dominate the cost and require $\mathcal{O}(L_{\mathrm{P}}N^2)$ operations. The $N^2\times N^2$ matrices
$\mathbf{L}^{(r)}$ and $\mathbf{L}_{\mathrm{P}}^{(r)}$
introduced in the derivation do not need to be explicitly
formed. Since $\mathbf{G}_n$ in \eqref{eq:pn_def} is rank one,
the products involving
$\mathbf{Q}_{\mathrm{P},0}^{(r)}$ and
$\mathbf{Q}_{\mathrm{P},2}^{(r)}$ that are required to obtain
$\mathbf{c}^{(r)}$ can be evaluated directly using
$\boldsymbol{\Phi}^{(\mathrm{P})}$ and
$\big(\boldsymbol{\Phi}^{(\mathrm{P})}\big)^{\mathrm{H}}$.
The required eigenvalue upper bounds can also be evaluated
without explicitly forming the above $N^2\times N^2$ matrices
by exploiting their structures. Simple row/column-sum bounds
can be used where applicable~\cite{2007gershgorin}. These
PAPR-related operations remain within
$\mathcal{O}(L_{\mathrm{P}}N^2)$ per iteration. The projection/normalization and NSP operations in the initialization stage are lightweight. The subsequent DCS and RCS updates require $\mathcal{O}(|\mathcal{D}|L_{\phi}+|\mathcal{R}|)$ operations, whereas the RCS-only designs require only $\mathcal{O}(|\mathcal{R}|)$ operations. These costs are negligible relative to the surrogate construction for a fixed alphabet size. Hence, the dominant per-iteration complexity remains $\mathcal{O}\left(\left(|\mathcal{A}|\log(N) +L_{\mathrm{P}}N\right)N\right)$. \textcolor{black}{For the main simulation setting in Section~\ref{sec:Sim} with $N=128$, $|\mathcal{A}|=152$, and $L_{\mathrm{P}}=4$, the above per-iteration complexity expression gives a numerical scale of approximately $2.02\times10^{5}$ for one basic JIPD-MM update.}

\vspace{-0.3cm}

\section{Numerical Simulations}
\label{sec:Sim}

In this section, we present numerical results to evaluate the proposed AFDM waveform design and JIPD-MM algorithm. We first evaluate the waveform-level ISL and PAPR results under different design modes, then examine the averaged iteration behavior under representative configurations, and finally investigate the resulting sensing and communication performance. Unless otherwise stated, the AFDM-ISAC system and algorithm parameters used throughout this section are summarized in Table~\ref{tab:sim_parameters}. The proposed JIPD-MM algorithm is initialized from a
conventional AFDM waveform. The main optimization iterations terminate when either the
maximum iteration index $r_{\max}$ is reached or
$\frac{
\left\|\mathbf{u}^{(r+1)}-\mathbf{u}^{(r)}\right\|
}{
\left\|\mathbf{u}^{(r)}\right\|
}
\le 10^{-4}$.

\textcolor{black}{Following the definition of $R_{\mathrm{eff}}$ in \eqref{eq:eta_eff}, the comparisons account for both the reduction in DCSs caused by the RCSs and the side-information overhead caused by optimized pre-chirp parameters. For the GPS baseline~\cite{AFDM_PAPR}, whose pre-chirp parameter is allowed to vary across all chirp-subcarriers, the overhead is $B_{\mathrm{SI}}=N\log_2L_{\phi}$. When selected waveform configurations are compared, their $R_{\mathrm{eff}}$ values are chosen to be close whenever applicable. When $|\mathcal{R}|/N$ is swept, the corresponding results instead illustrate the sensing--communication trade-off as the number of RCSs varies. Unless otherwise stated, the simulations assign the RCSs to the highest-index chirp-subcarriers, i.e., $\mathcal{R}=\{N-|\mathcal{R}|,\ldots,N-1\}$.}

\textcolor{black}{We use PSK in the main simulations to maintain the constant-modulus data symbols across the different design modes and avoid introducing additional amplitude variations from the data constellation into the comparisons. However, the proposed formulation is not restricted to PSK constellations. Additional results with nonconstant-modulus QAM data symbols are provided in Appendices~\ref{app:extended_delay} and~\ref{app:qam_extension}.}

\begin{table}[t]
  \centering
  \caption{Main simulation parameters}
  \label{tab:sim_parameters}
  \begin{tabular}{l l}
    \hline
    \textbf{Parameter} & \textbf{Value} \\
    \hline
    Carrier frequency $f_c$ & $28~\text{GHz}$ \\
    System bandwidth $B$ & $100~\text{MHz}$ \\
    Number of AFDM subcarriers $N$ & $128$ \\
    \textcolor{black}{CPP/CP length} & \textcolor{black}{$N/4$}\\
    DAFT parameter $c_1$ & $21/(2N)$ \\
    Pre-chirp alphabet size & $8$ \\
    Side information $b_{\mathrm{SI}}$ & 3\\
    LAZ Delay interval $[-\tau_{\max},\tau_{\max}]$ & $[-8, 8]$\\
    LAZ Doppler interval $[\mu_{\min},\mu_{\max}]$ & $[-4, 4]$ \\
    Number of Doppler grid points $L_{\mu}$ & 9\\
    $w_{\tau,\mu_{q}}$ for $ (\tau,\mu_q)\in\mathcal{A}\backslash\{(0,0)\}$ &
    1 \\
    PAPR oversampling factor $L_{\mathrm{P}}$ & $4$ \\[0.2em]
    PAPR $\ell$-norm exponent $\ell$ & $16$ \\[0.2em]
    Maximum initialization-stage iterations $r_{\mathrm{I}}^{\max}$ & $30$ \\[0.2em]
    Maximum JIPD-MM iterations $r_{\max}$ & $\textcolor{black}{300}$ \\[0.2em]
    \hline
  \end{tabular}
\end{table}

The AFDM waveforms compared in this section are listed below, where the bold names (and their abbreviations) are also used as legend labels:
\begin{itemize}
  \item \textbf{Conventional AFDM}: conventional AFDM waveform with random data symbols~\cite{AFDMOR}, used both as a benchmark and as the initialization of the proposed JIPD-MM algorithm.

  \item \textbf{GPS PAPR minimization~\cite{AFDM_PAPR}}: AFDM waveform with reduced PAPR via GPS method proposed in~\cite{AFDM_PAPR}.

  \item \textbf{Proposed AF shaping (RCS)} and
        \textbf{Proposed PAPR minimization (RCS)}: Only the RCS symbols $\{u[m]\}_{m\in\mathcal{R}}$ are optimized, while the pre-chirp parameters are fixed. In each case, a single metric (ISL or PAPR) is optimized.

  \item \textbf{Proposed AF shaping (RCS+$c_{2,m}$)} and
        \textbf{Proposed PAPR minimization (RCS+$c_{2,m}$)}: both RCS symbols and pre-chirp parameters are optimized for a single metric (ISL or PAPR).

  \item \textbf{Proposed AF shaping and PAPR control (RCS)} and \textbf{Proposed AF shaping and PAPR control (RCS+$c_{2,m}$)}: the RCS symbols are optimized either alone or jointly with the per-subcarrier pre-chirp parameters for the joint AF shaping and PAPR control mode under a prescribed PAPR constraint.
\end{itemize}

\textcolor{black}{Additional comparisons with OCDM and OFDM are provided in Appendix~\ref{app:waveform_extension}.}

\captionsetup[figure]{labelfont={color=black},textfont={color=black}}
\begin{figure*}[t!]
  \color{black}
  \centering
  \includegraphics[width=1.73\columnwidth,
 trim=0 83 0 08,clip]{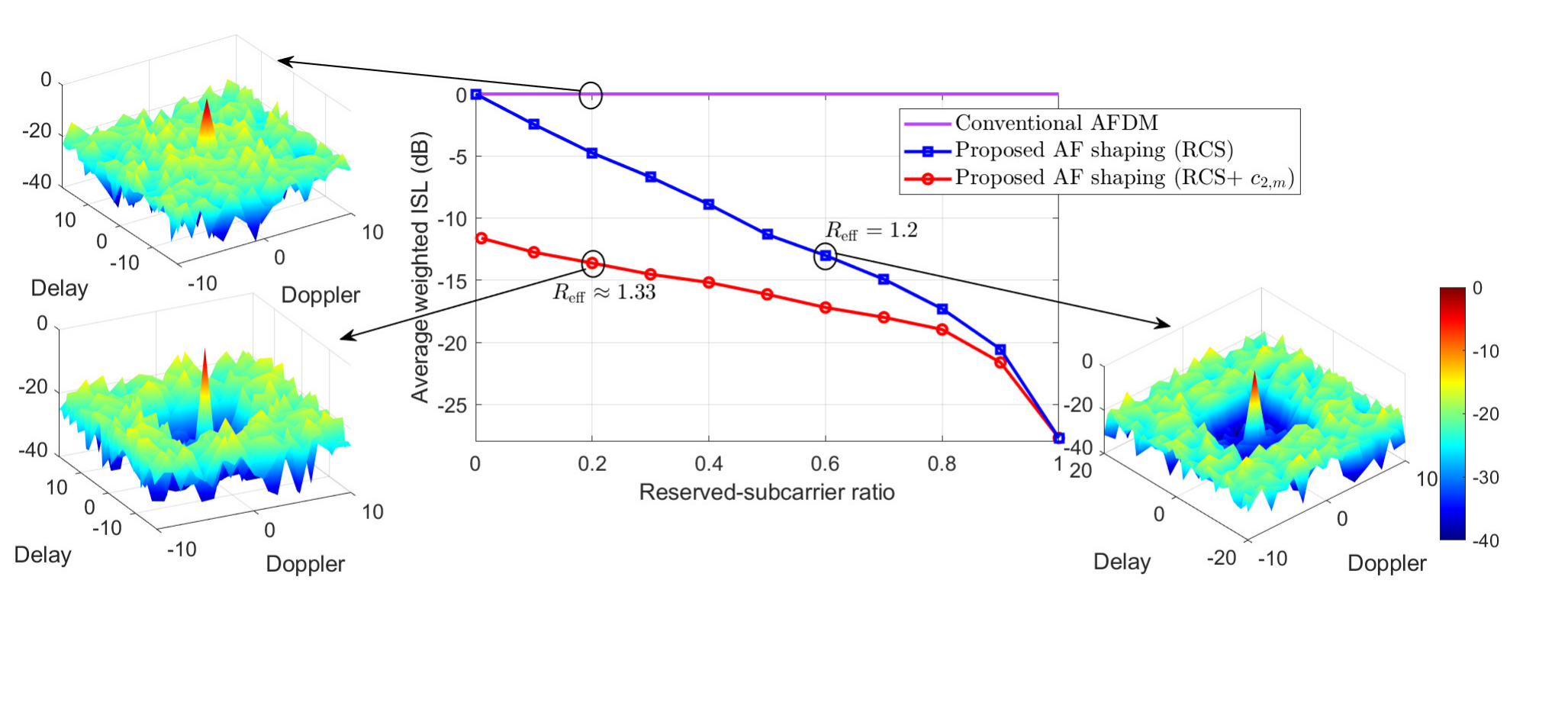}%
  \caption{Average weighted ISL versus reserved-chirp-subcarrier ratio
  $|\mathcal{R}|/N$ for the proposed AF shaping designs, together with representative AF examples.}
  \vspace{-0.3cm}
  \label{fig:ISL_vs_R}
  \color{black}
\end{figure*}

\subsection{AF and PAPR}
\label{subsec:waveform_level}

We first examine the waveform-level characteristics of the proposed AFDM designs in terms of weighted ISL and PAPR.

\subsubsection{AF Shaping Mode}
Fig.~\ref{fig:ISL_vs_R} evaluates the proposed AF shaping mode in terms of the average weighted ISL within the LAZ, where each point is averaged over 100 independent trials. For each sampled value on the $|\mathcal{R}|/N$ axis, the nearest integer value of $|\mathcal{R}|\geq1$ is used. All waveforms in this figure use 8PSK data symbols, and the average weighted ISL is reported in dB relative to conventional AFDM with 8PSK. Conventional AFDM with QPSK was also tested for the fair effective-spectral-efficiency comparisons below but is omitted from this figure, as its average weighted ISL differs from that of conventional AFDM with 8PSK by less than $0.003$~dB. For both proposed designs, increasing $|\mathcal{R}|/N$ reduces the average weighted ISL, illustrating the trade-off between AF shaping and effective spectral efficiency.

\textcolor{black}{The AF examples in Fig.~\ref{fig:ISL_vs_R} compare the two proposed designs with similar average ISL reductions. The RCS+$c_{2,m}$ design is selected at $|\mathcal{R}|/N\approx0.2$, with $R_{\mathrm{eff}}\approx1.33$. Its weighted ISL is reduced by $12.94$~dB in the displayed AF example, while the average reduction on the curve is $13.62$~dB. The RCS-only design is selected at $|\mathcal{R}|/N\approx0.6$, with $R_{\mathrm{eff}}\approx1.20$. Its weighted ISL is reduced by $12.51$~dB in the displayed AF example, while the average reduction on the curve is $13.01$~dB. The two operating points achieve similar average ISL reductions, while the RCS+$c_{2,m}$ design maintains a slightly higher effective spectral efficiency. This indicates that the RCS-only design is a simpler option when avoiding side information overhead is important, whereas the RCS+$c_{2,m}$ design achieves a similar ISL reduction with a slightly higher effective spectral efficiency. An additional AF shaping example with $N=1024$ and larger delay interval is provided in Appendix~\ref{app:extended_delay}.}

\subsubsection{PAPR Minimization Mode}
\begin{figure}[tb]
  \centering
  \subfigure[$R_{\mathrm{eff}}\approx2$ configuration.]{
  \includegraphics[width=0.93\columnwidth,
 trim=32 0 28 8,clip]{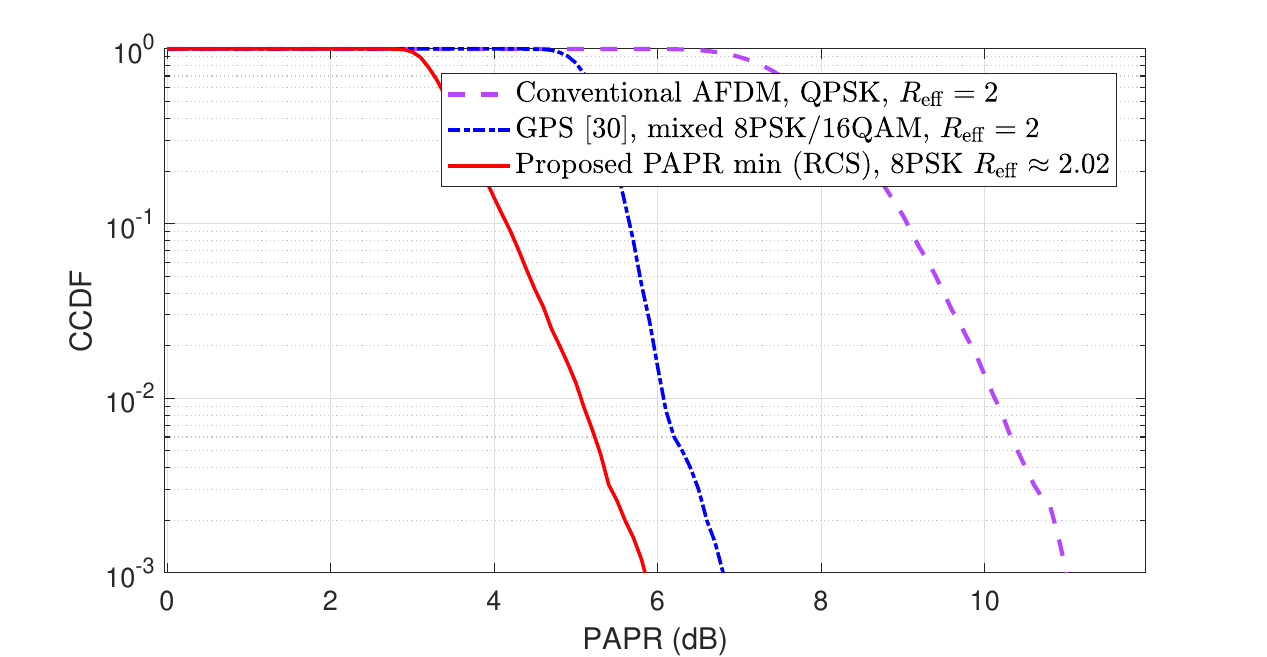}
  \label{fig:PAPR_CCDF_Reff2}}

  \subfigure[$R_{\mathrm{eff}}=1$ configuration.]{
  \includegraphics[width=0.93\columnwidth,
 trim=32 0 28 8,clip]{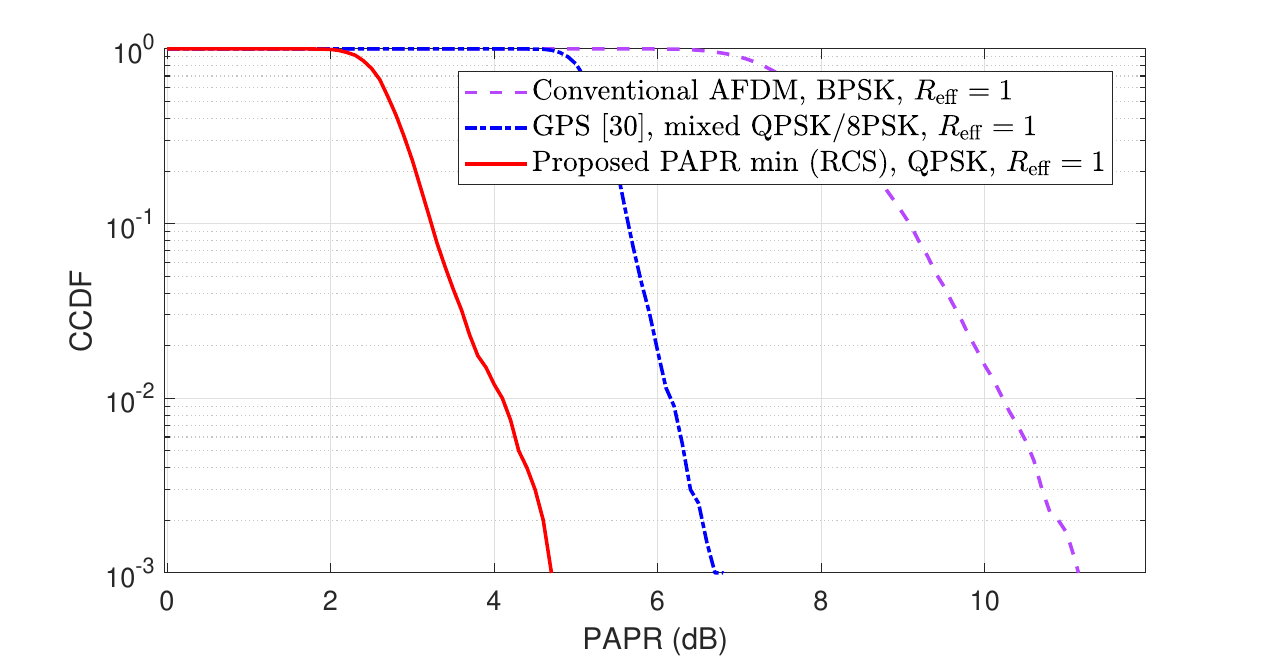}
  \label{fig:PAPR_CCDF_Reff1}}
  \caption{PAPR CCDF comparison for the PAPR minimization mode under two effective-spectral-efficiency configurations.}
  \vspace{-0.3cm}
  \label{fig:PAPR_CCDF}
\end{figure}

\textcolor{black}{Fig.~\ref{fig:PAPR_CCDF} compares the PAPR CCDF of conventional AFDM, the GPS baseline~\cite{AFDM_PAPR}, and the proposed RCS-only design for PAPR minimization. The GPS baseline is implemented with one full sweep over all chirp-subcarriers, where the pre-chirp parameter is updated once for each chirp-subcarrier within an alphabet having the same size as that of the proposed designs. Two effective-spectral-efficiency configurations are considered. In Fig.~\ref{fig:PAPR_CCDF_Reff2}, conventional AFDM uses QPSK and no RCSs, giving $R_{\mathrm{eff}}=2.00$. GPS uses mixed 8PSK/16QAM data symbols, with 64 chirp-subcarriers using 8PSK and the other 64 using 16QAM; its side-information symbols use 16QAM, giving $R_{\mathrm{eff}}=2.00$. The proposed RCS-only design uses 8PSK with $|\mathcal{R}|/N\approx0.33$, giving $R_{\mathrm{eff}}\approx2.02$. In Fig.~\ref{fig:PAPR_CCDF_Reff1}, conventional AFDM uses BPSK, giving $R_{\mathrm{eff}}=1.00$. GPS uses mixed QPSK/8PSK data symbols, with 64 subcarriers using QPSK and the other 64 subcarriers using 8PSK; its side-information symbols use QPSK, giving $R_{\mathrm{eff}}=1.00$. The proposed RCS-only design uses QPSK with $|\mathcal{R}|/N=0.5$, also giving $R_{\mathrm{eff}}=1.00$. Under both configurations, the proposed RCS-only design provides a clear PAPR reduction compared with conventional AFDM and GPS at the same effective spectral efficiency. As shown in the additional comparison in Appendix~\ref{app:papr_comparison}, the design that uses both reserved subcarriers and pre-chirp parameters does not achieve a lower PAPR than the RCS-only design at similar effective spectral efficiency. Therefore, the RCS-only design is adopted for the PAPR-minimization results presented.}

\subsubsection{Joint AF Shaping and PAPR Control Mode}
\begin{figure*}[tb]
  \centering
  \includegraphics[width=1.58\columnwidth,
  trim=27 0 28 0,clip]{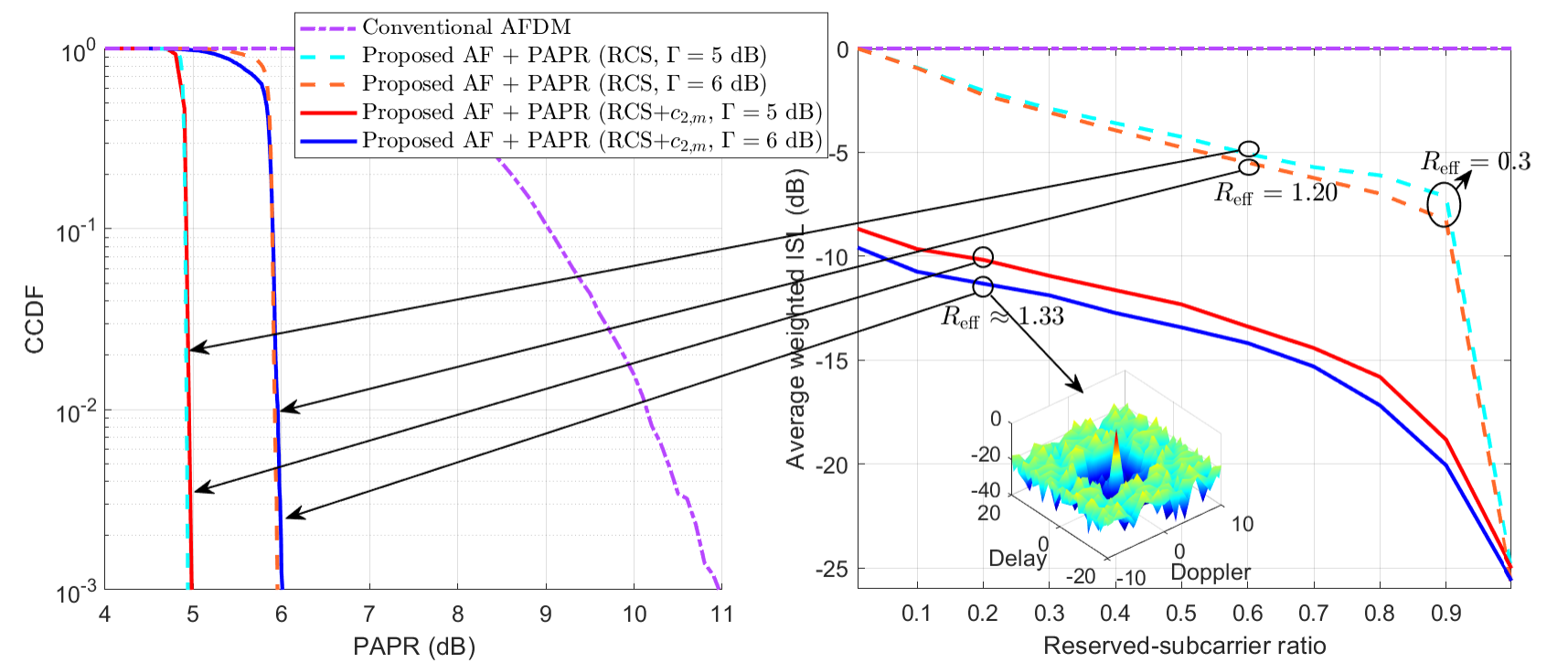}
  \caption{Average weighted ISL versus reserved-chirp-subcarrier ratio
  $|\mathcal{R}|/N$ for the joint AF shaping and PAPR control mode, together with representative PAPR CCDF examples. The PAPR constraints are set as $\Gamma=5$~dB and $\Gamma=6$~dB.}
  \vspace{-0.4cm}
  \label{fig:PAPR_CCDFISLRr}
\end{figure*}

\textcolor{black}{Fig.~\ref{fig:PAPR_CCDFISLRr} presents the waveform characteristics of the joint AF shaping and PAPR control mode for $\Gamma=5$~dB and $\Gamma=6$~dB. For each constraint, the RCS-only and RCS+$c_{2,m}$ designs are evaluated. The average weighted ISL is reported in dB relative to conventional AFDM. The representative PAPR CCDF examples use $|\mathcal{R}|/N\approx0.2$ and $R_{\mathrm{eff}}\approx1.33$ for the RCS+$c_{2,m}$ design, and $|\mathcal{R}|/N\approx0.6$ and $R_{\mathrm{eff}}\approx1.20$ for the RCS-only design. The PAPR CCDF examples verify their PAPR-control capability at these representative settings. The average-weighted-ISL curves further show that the RCS+$c_{2,m}$ design achieves a larger ISL reduction than the RCS-only design under comparable effective spectral efficiency. This trend is consistent with the AF shaping result in Fig.~\ref{fig:ISL_vs_R}. The additional per-subcarrier pre-chirp design makes the advantage more pronounced when AF shaping and PAPR control are considered jointly. Accordingly, the representative comparisons of this mode below use the RCS+$c_{2,m}$ design.}

\subsubsection{\textcolor{black}{Iteration Behavior}}
\label{subsec:iteration_behavior}

\textcolor{black}{Fig.~\ref{fig:JIPD_iteration} examines the iteration behavior of the proposed designs under representative configurations with similar effective spectral efficiencies. The AF shaping and PAPR minimization modes employ the RCS-only design with $|\mathcal{R}|/N\approx0.55$ and $R_{\mathrm{eff}}\approx1.34$, whereas the joint AF shaping and PAPR control mode employs the RCS+$c_{2,m}$ design with $|\mathcal{R}|/N\approx0.2$, $R_{\mathrm{eff}}\approx1.33$, and $\Gamma=5$~dB. All configurations use 8PSK data symbols, and the curves are averaged over 300 independent data realizations.}

\begin{figure}[tb]
\centering

\subfigure[Average weighted ISL.]{
\includegraphics[width=0.88\columnwidth]{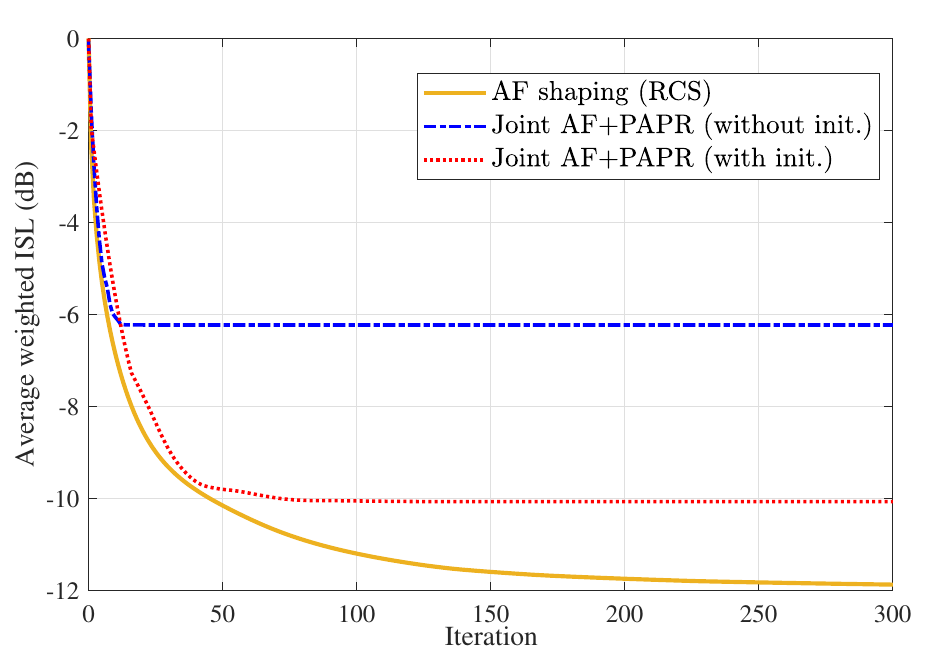}
\label{fig:JIPD_iteration_ISL}}

\subfigure[PAPR.]{
\includegraphics[width=0.88\columnwidth]{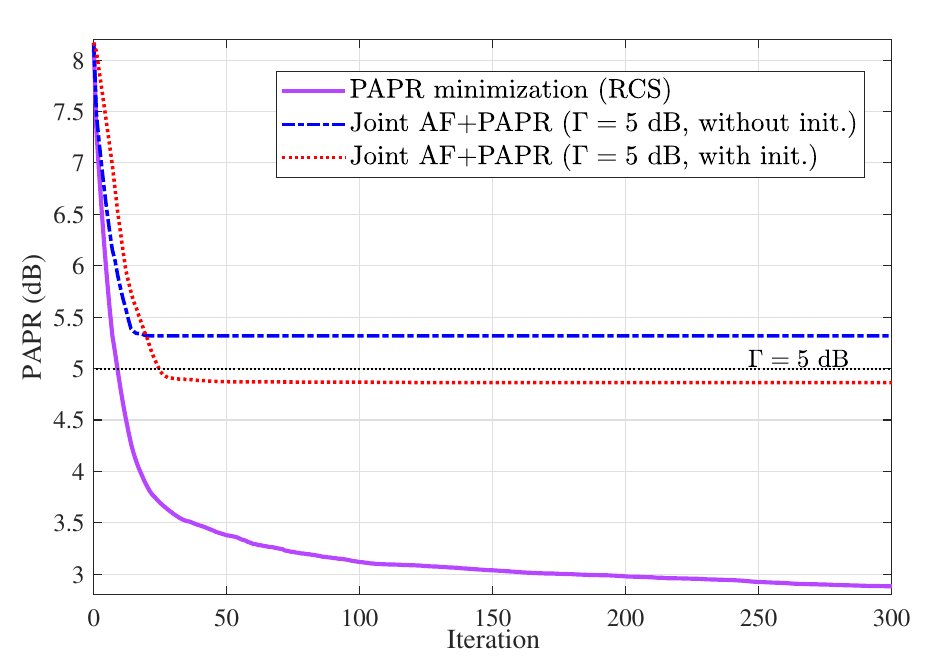}
\label{fig:JIPD_iteration_PAPR}}

\caption{Average iteration behavior of the proposed waveform designs.}
\vspace{-0.3cm}
\label{fig:JIPD_iteration}
\end{figure}

\textcolor{black}{For the two single-objective modes, most of the performance improvement is obtained within the first approximately 50 iterations. Although the strict iterate-based stopping criterion is not always met before $r_{\max}$, the corresponding performance metrics become nearly stable considerably earlier. The AF-shaping and PAPR-minimization modes require on average approximately 253.5 and 183.7 total iterations, respectively, with approximately 63\% and 21\% of the realizations reaching $r_{\max}=300$.}

\textcolor{black}{For the joint AF shaping and PAPR control mode, monotonic decrease of the individual ISL and PAPR metrics is not expected because the two criteria are optimized jointly. Nevertheless, the curves averaged over the independent data realizations show a tendency toward stable levels. With the initialization stage, most of the improvement in the averaged ISL and PAPR occurs within approximately 50--70 total iterations. Moreover, compared with the case without initialization stage, the design with initialization stage achieves both a lower weighted ISL and a lower PAPR, with the average PAPR satisfying the prescribed $\Gamma=5$~dB requirement, demonstrating the importance of the initialization stage. Under the same stopping criterion, the design with initialization stage requires on average approximately 204.4 total iterations, with approximately 35\% of the realizations reaching $r_{\max}=300$, despite the much earlier stabilization observed in the averaged curves.}

\subsection{Sensing and Communication Performance}
\label{subsec:sensing_perf}

We then evaluate whether the above improvements translate into sensing and communication performance gains. To avoid the impact of error propagation across different scenarios, for the communication evaluations involving optimized pre-chirp parameters, the corresponding side information is assumed to be correctly recovered, while its transmission overhead is included in $R_{\mathrm{eff}}$.

For all communication evaluations below, the discrete-time transmit samples are passed through a memoryless Rapp PA model~\cite{Rapp}. Let $s_{\mathrm{PA,in}}[n]$ and $s_{\mathrm{PA,out}}[n]$ denote the PA input and output samples, respectively. The baseband-equivalent static nonlinearity is
\begin{equation}
 s_{\mathrm{PA,out}}[n]
 =
 \frac{s_{\mathrm{PA,in}}[n]}
 {\left(
  1 + \left(\frac{A_{\mathrm{in}}}{A_{\mathrm{sat}}}\right)^{2p}
 \right)^{\tfrac{1}{2p}}},
 \qquad A_{\mathrm{in}}=|s_{\mathrm{PA,in}}[n]|,
 \label{eq:rapp_model}
\end{equation}
where $A_{\mathrm{sat}}$ denotes the saturation amplitude and $p$ controls the smoothness of the transition to saturation. Each waveform realization is normalized to unit average power before amplification, and $A_{\mathrm{sat}}=1$ and $p=3$ are used for all compared waveforms.

\subsubsection{AF Shaping Mode}

\textcolor{black}{We first evaluate weak-target detection performance. A two-target scenario is considered, where the strong target has an SNR $10$~dB higher than that of the weak target. Detection is performed using a CA-CFAR detector~\cite{4102622}, whose threshold is estimated from the eight surrounding delay--Doppler bins. Conventional AFDM with QPSK data symbols is used as the baseline, with $R_{\mathrm{eff}}=2.00$. The proposed RCS-only AF shaping design uses 8PSK data symbols with $|\mathcal{R}|/N\approx0.33$ and $R_{\mathrm{eff}}\approx2.02$.}

\begin{figure}[tb]
  \color{black}
  \centering
  \includegraphics[width=0.88\columnwidth,
trim=12 0 10 10,clip]{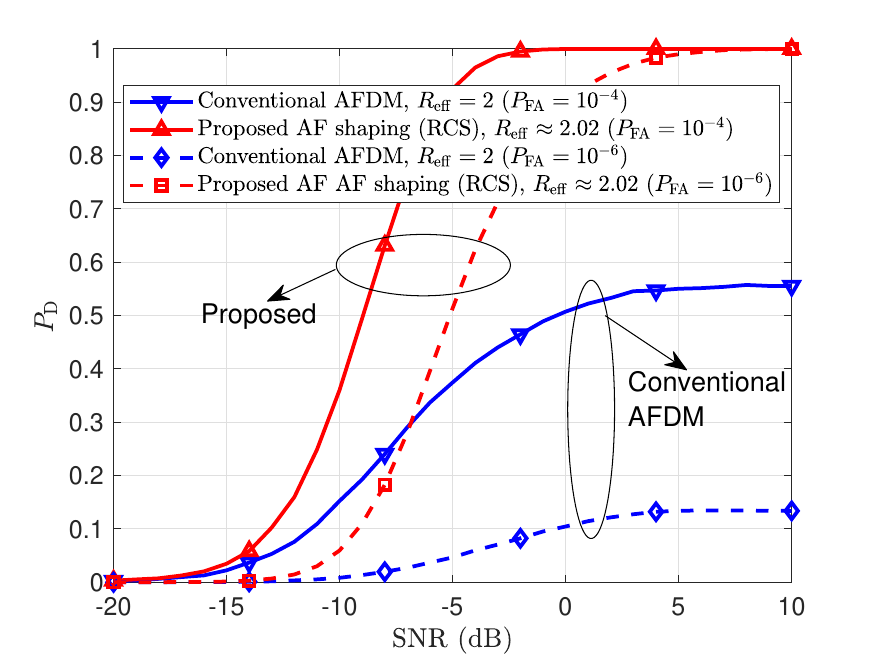}
  \caption{Weak-target detection probability versus SNR for the AF shaping mode.}
  \vspace{-0.3cm}
  \label{fig:weakPd_vs_SNR}
  \color{black}
\end{figure}

\begin{figure}[tb]
  \color{black}
  \centering
  \includegraphics[width=0.88\columnwidth,
trim=12 0 10 10,clip]{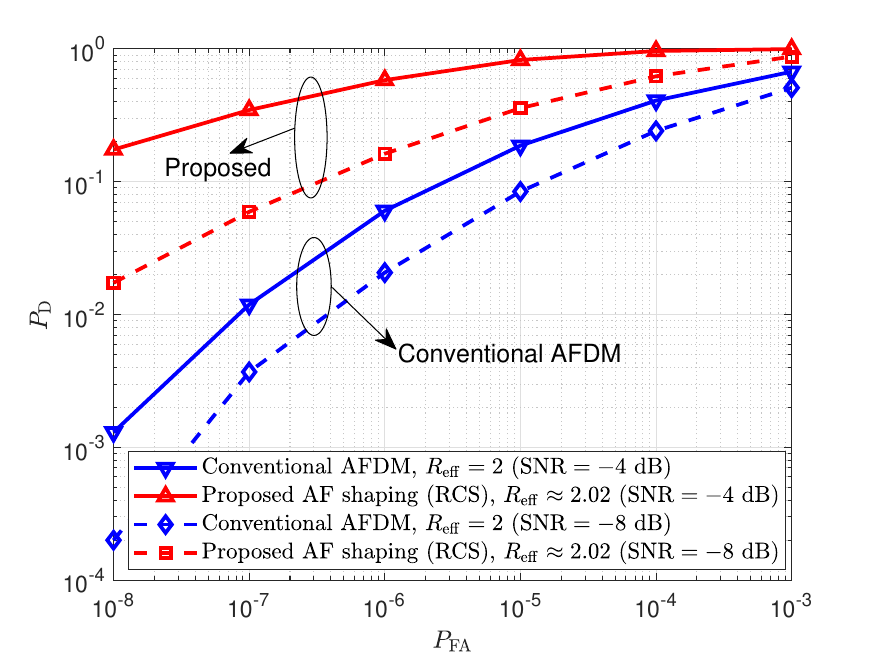}
  \caption{ROC curves of weak-target detection for the AF shaping mode.}
  \vspace{-0.3cm}
  \label{fig:weakROC}
  \color{black}
\end{figure}

\textcolor{black}{Figs.~\ref{fig:weakPd_vs_SNR} and~\ref{fig:weakROC} show the resulting weak-target detection performance. Fig.~\ref{fig:weakPd_vs_SNR} reports $P_{\mathrm{D}}$ as a function of sensing SNR under $P_{\mathrm{FA}}=10^{-4}$ and $P_{\mathrm{FA}}=10^{-6}$. For both false-alarm probabilities, the proposed AF shaping design achieves a higher $P_{\mathrm{D}}$ than conventional AFDM. Fig.~\ref{fig:weakROC} further compares the ROC curves at sensing SNRs of $-8$~dB and $-4$~dB. At both SNR values, the proposed design provides a higher $P_{\mathrm{D}}$ for the same $P_{\mathrm{FA}}$. These results show that the AF sidelobe reduction can effectively improve weak-target detectability.}

\subsubsection{PAPR Minimization Mode}
\label{subsec:comm_perf}

\textcolor{black}{We next evaluate communication performance under nonlinear PA. The same two waveform configurations as those in Fig.~\ref{fig:PAPR_CCDF} are used: the $R_{\mathrm{eff}}\approx2$ configuration in Fig.~\ref{fig:PAPR_CCDF_Reff2} is evaluated over an AWGN channel, while the $R_{\mathrm{eff}}=1$ configuration in Fig.~\ref{fig:PAPR_CCDF_Reff1} is evaluated over a doubly selective channel. For the doubly selective fading case, a three-path time-varying multipath channel is considered. The path delays are fixed at $[2,3,4]$ samples, and the average path-power profile is $[0,-5,-10]$~dB. For each channel realization, the complex path gains are independently generated according to the corresponding average path powers. The normalized Doppler shift of the $p$th path is generated as $\nu_p=\nu_{\max}\cos\theta_p$, where $\theta_p\sim\mathcal{U}[-\pi,\pi]$ and $\nu_{\max}=3$. Perfect channel state information is assumed at the receiver, and MMSE detection is applied.}

\begin{figure}[tb]
  \centering
  \subfigure[AWGN channel, using the $R_{\mathrm{eff}}\approx2$ configuration in Fig.~\ref{fig:PAPR_CCDF_Reff2}.]{
  \includegraphics[width=0.85\columnwidth]{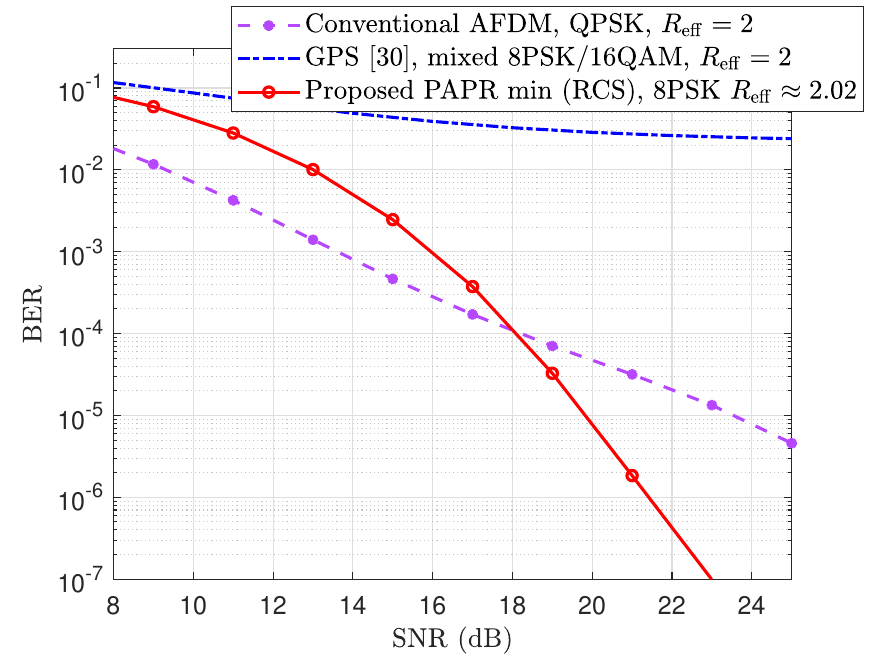}
  \label{fig:PAPR_BER_AWGN}}

  \subfigure[Doubly selective channel, using the $R_{\mathrm{eff}}=1$ configuration in Fig.~\ref{fig:PAPR_CCDF_Reff1}.]{
  \includegraphics[width=0.85\columnwidth]{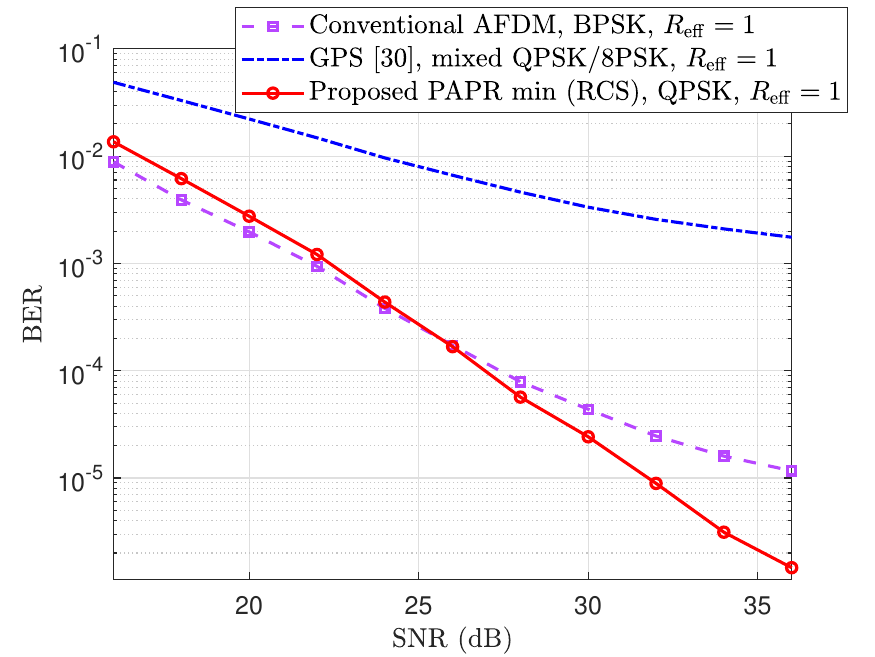}
  \label{fig:PAPR_BER_DS}}
  \caption{BER performance with nonlinear PA for the PAPR minimization mode.}
  \vspace{-0.4cm}
  \label{fig:PAPR_BER}
\end{figure}

\textcolor{black}{Figs.~\ref{fig:PAPR_BER_AWGN} and~\ref{fig:PAPR_BER_DS} show the BER performance under the $R_{\mathrm{eff}}\approx2$ configuration with AWGN channel and the $R_{\mathrm{eff}}=1$ configuration with doubly selective channel, respectively. In both cases, the proposed RCS-only design is close to or slightly worse than conventional AFDM in the lower-SNR region, where conventional AFDM benefits from a lower-order constellation. As SNR increases, nonlinear PA distortion becomes more influential, and the proposed design achieves a lower BER than conventional AFDM despite using a higher-order constellation. It also achieves a substantially lower BER than GPS over the tested SNR range at the same effective spectral efficiency. These results indicate that the PAPR reduction translates into a BER advantage when nonlinear PA distortion becomes dominant.}

\subsubsection{\textcolor{black}{Joint AF Shaping and PAPR Control Mode}}

\textcolor{black}{The preceding results have shown that the proposed framework can effectively suppress AF sidelobes and reduce PAPR, as demonstrated in Figs.~\ref{fig:ISL_vs_R} and~\ref{fig:PAPR_CCDF}, and that these waveform-level improvements can translate into sensing and communication performance gains, as shown in Figs.~\ref{fig:weakPd_vs_SNR}--\ref{fig:weakROC} and Fig.~\ref{fig:PAPR_BER}. The remaining question is how the joint AF shaping and PAPR control mode performs compared with the two single-objective modes of the proposed framework. Since conventional AFDM and GPS do not provide a direct AFDM-ISAC benchmark that jointly addresses AF sidelobe shaping and PAPR control, the AF shaping mode and the PAPR minimization mode are used here as internal single-objective references. All three waveforms use 8PSK data symbols and have similar effective spectral efficiencies (same as in
Fig.~\ref{fig:JIPD_iteration}). Weak-target detection is evaluated at $P_{\mathrm{FA}}=10^{-6}$, while BER is evaluated over an AWGN channel with the same nonlinear PA model.}

\begin{figure}[tb]
\centering
\subfigure[Weak-target detection probability versus SNR.]{
\includegraphics[width=0.85\columnwidth,
trim=10 0 10 13,clip]{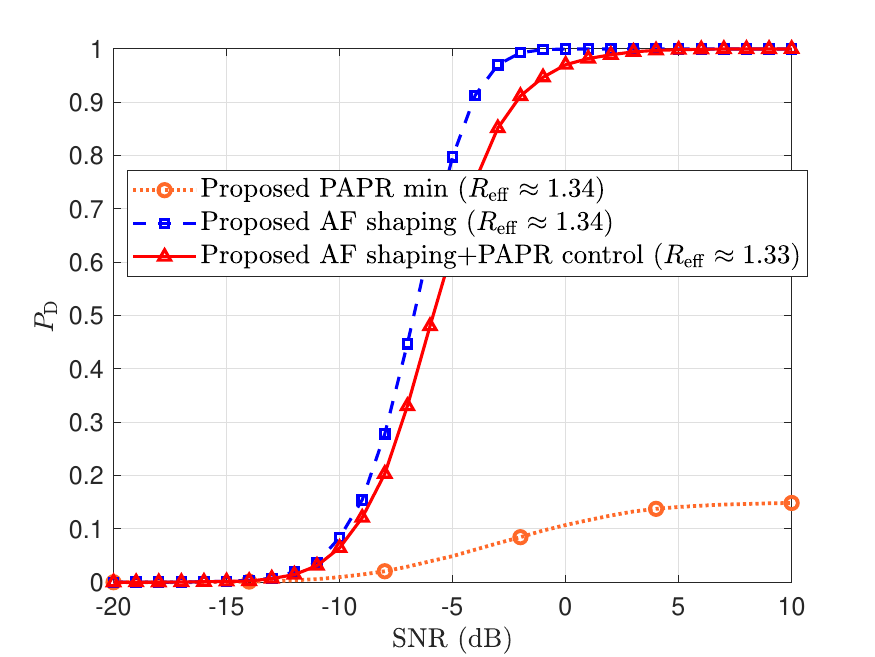}
\label{fig:joint_pd_snr}}
\hfill
\subfigure[BER under nonlinear PA distortion.]{
\includegraphics[width=0.85\columnwidth,
trim=10 0 10 13,clip]{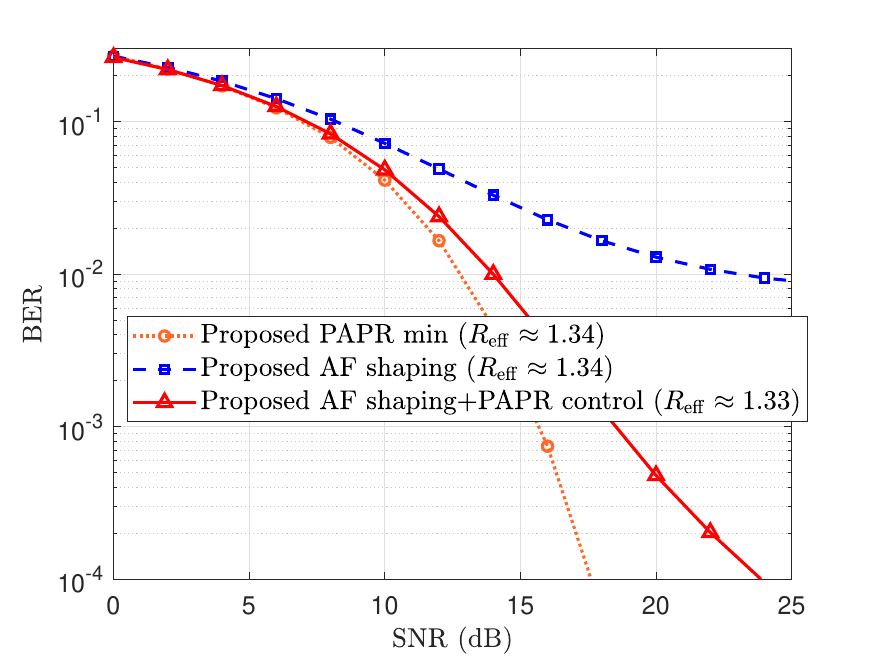}
\label{fig:joint_ber}}
\caption{Detection and communication performance comparison of the proposed joint AF shaping and PAPR control design with the two proposed single-objective designs under similar $R_{\mathrm{eff}}$.}
\vspace{-0.4cm}
\label{fig:joint_performance}
\end{figure}

\textcolor{black}{Figs.~\ref{fig:joint_pd_snr} and~\ref{fig:joint_ber} show the sensing--communication trade-off among the three modes. AF shaping provides the highest weak-target detection probability, whereas PAPR minimization gives the best BER. The joint mode retains most of the detection gain while improving the BER relative to AF shaping, providing a balanced trade-off under comparable effective spectral efficiencies.}

\vspace{-0.2cm}

\section{Conclusion}

In this work, we investigated flexible AFDM-ISAC waveform design for AF shaping and PAPR control. We developed a unified optimization framework that flexibly exploits the RCS symbols and per-subcarrier pre-chirp parameters through the same formulation. Three operating modes for AF shaping, PAPR minimization, and joint AF shaping and PAPR control were considered to support different design priorities. To handle the nonconvex problem with discrete-phase constraints, we iteratively replace it with tractable MM surrogate problems. Numerical results demonstrate clear AF-sidelobe and PAPR reductions across the different operating modes, with the benchmark comparisons performed at similar effective spectral efficiencies. The two single-objective modes respectively favor weak-target detection and BER performance under nonlinear PA distortion, while the joint mode achieves ISL reduction under the prescribed PAPR constraint and enables a more flexible balance between sensing and communication performance.

Future research directions include developing lower-complexity implementations and statistical-metric-based design approaches.
\vspace{-0.2cm}
\appendices
\section{\textcolor{black}{AF Shaping over an Extended Delay Range}}
\label{app:extended_delay}

\textcolor{black}{We further consider an AF shaping example with $N=1024$, 16QAM data symbols, and the zero-Doppler delay interval $\tau\in[-200,200]$. The CPP/CP length is set to $N/4=256$ samples, while the remaining parameters follow Table~\ref{tab:sim_parameters}. The RCS design uses $|\mathcal{R}|/N\approx0.55$, while the RCS+$c_{2,m}$ design uses $|\mathcal{R}|/N\approx0.18$; both configurations have $R_{\mathrm{eff}}\approx1.80$. The results are averaged over 100 independent realizations. At $B=100$~MHz, the considered delay interval corresponds to a relative-range interval of $[-300,300]$~m. As shown in Fig.~\ref{fig:extended_delay}, both proposed designs substantially suppress the zero-Doppler AF sidelobes throughout the considered delay interval.}

\begin{figure}[tb]
\color{black}
\centering
\includegraphics[width=0.8\columnwidth,
 trim=8 2 5 13,clip]{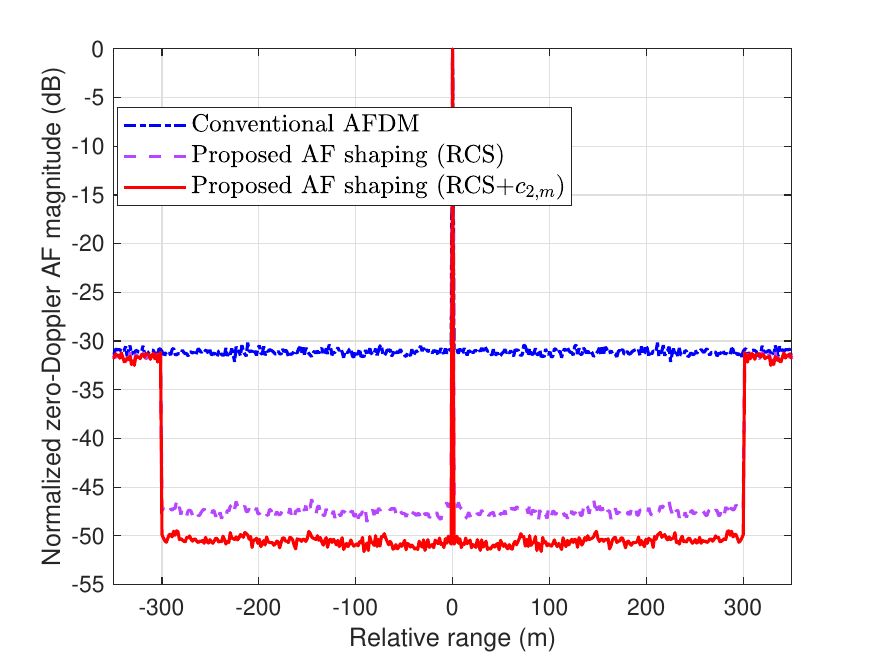}
\caption{Average normalized zero-Doppler AF magnitude with $N=1024$ and 16QAM data symbols.}
\vspace{-0.3cm}
\label{fig:extended_delay}
\color{black}
\end{figure}

\begin{figure}[t]
\color{black}
\centering
\includegraphics[width=0.9\columnwidth,
 trim=25 0 28 8,clip]
{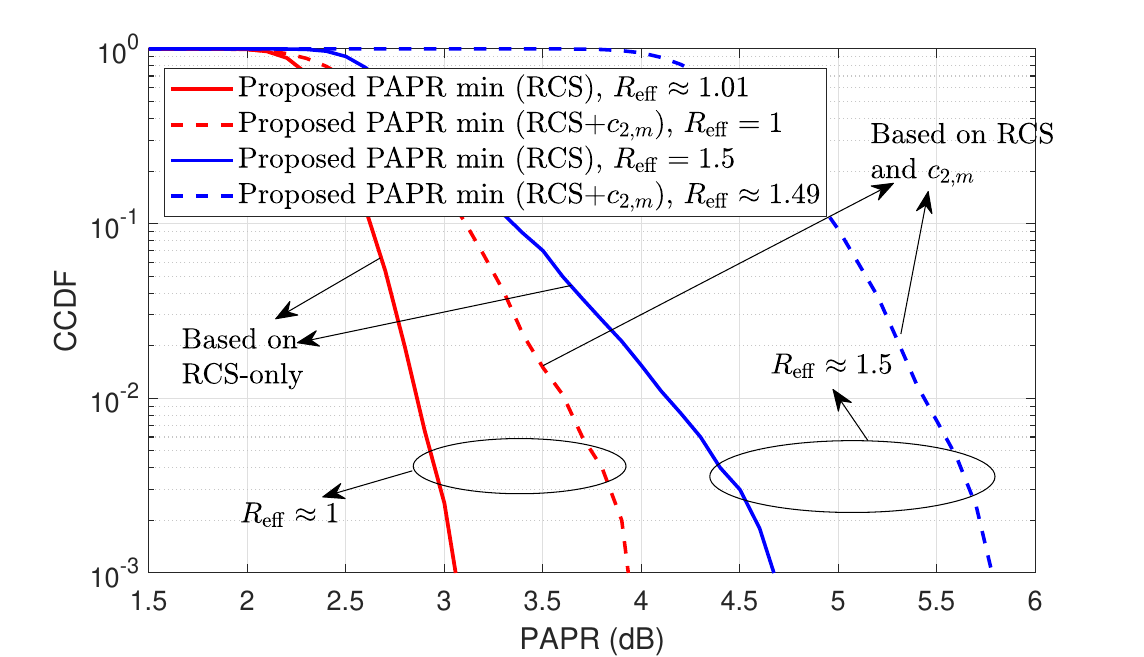}
\caption{Additional PAPR CCDF comparison between the RCS-only
and RCS+$c_{2,m}$ designs.}
\vspace{-0.3cm}
\label{fig:app_papr_comparison}
\color{black}
\end{figure}

\vspace{-0.2cm}
\section{\textcolor{black}{Additional Comparison for PAPR Minimization}}
\label{app:papr_comparison}

\textcolor{black}{Fig.~\ref{fig:app_papr_comparison} compares the RCS-only and RCS+$c_{2,m}$ designs using 8PSK data symbols. The
two pairs are evaluated at $R_{\mathrm{eff}}\approx1$ and $R_{\mathrm{eff}}\approx1.5$, respectively. Under both configurations, the RCS-only design achieves a lower PAPR CCDF than the RCS+$c_{2,m}$ design with similar $R_{\mathrm{eff}}$. Therefore, when only PAPR minimization is considered, using the available RCS symbols for waveform design is more effective than optimizing the per-subcarrier pre-chirp parameters.}
\vspace{-0.2cm}
\section{\textcolor{black}{Extension to OFDM and OCDM}}
\label{app:waveform_extension}

\begin{figure}[t]
\centering

\subfigure[Weak-target detection probability.]{
\includegraphics[width=0.78\columnwidth,
 trim=10 2 10 18,clip]{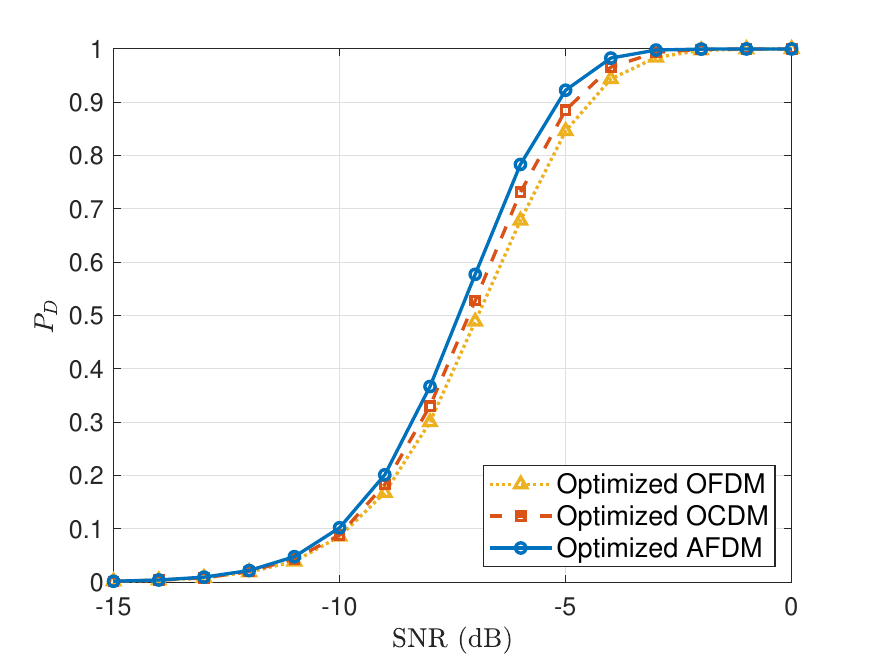}
\label{fig:waveform_extension_pd}}
\vspace{-0.2cm}
\subfigure[BER over the doubly selective channel.]{
\includegraphics[width=0.78\columnwidth,
 trim=10 2 10 18,clip]{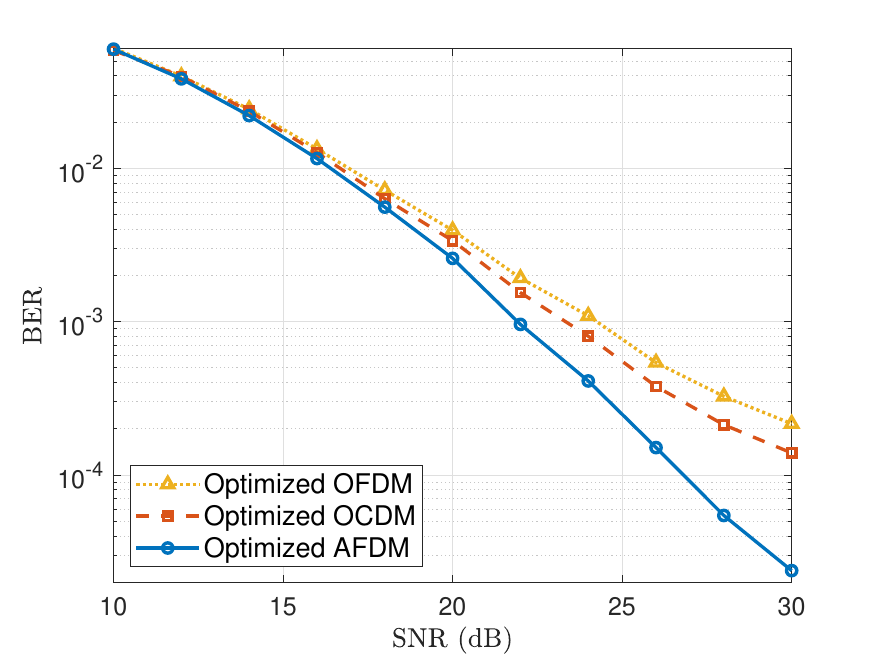}
\label{fig:waveform_extension_ber}}

\caption{Comparison of the optimized AFDM, OCDM, and OFDM waveforms at the same effective spectral efficiency.}
\vspace{-0.2cm}
\label{fig:waveform_extension}
\end{figure}

\textcolor{black}{The proposed RCS-based framework can also be applied to OCDM and OFDM by fixing the chirp parameters according to the corresponding modulation matrices. We consider OCDM with $c_1=c_2=1/(2N)$, and OFDM with $c_1=c_2=0$. All three waveforms use $N=128$, QPSK data symbols, $|\mathcal{D}|=|\mathcal{R}|=64$, and hence $R_{\mathrm{eff}}=1$. The DCSs and RCSs are uniformly interleaved across the full transform-domain band. For sensing, the RCS-only AF shaping design is evaluated using the same two-target detection setting as in Section~\ref{subsec:sensing_perf}, with $P_{\mathrm{FA}}=10^{-6}$. For communication, the RCS-only PAPR minimization designs are evaluated over the same doubly selective channel and PA setting as in Section~\ref{subsec:comm_perf}. As shown in Fig.~\ref{fig:waveform_extension}, under the considered configuration, the optimized AFDM provides a slightly higher weak-target detection probability and a lower BER than the optimized OCDM and OFDM. The BER advantage is consistent with the enhanced robustness of AFDM to doubly selective channels~\cite{AFDMOR}.}

\begin{table}[ht]
\color{black}
\centering
\captionsetup{font={small,color=black}}
\caption{AFDM waveform-design results with 16QAM data symbols. The reported metrics are the average weighted ISL for AF shaping mode and the PAPR for PAPR minimization mode.}

\label{tab:qam_extension}
\begin{tabular}{lcc}
\hline
Design mode & Initial (dB) & Optimized (dB) \\
\hline
AF shaping        & $0$     & $-11.22$ \\
PAPR minimization & $8.16$  & $3.08$ \\
\hline
\end{tabular}
\color{black}
\end{table}

\section{\textcolor{black}{Extension to QAM Data Modulation}}
\label{app:qam_extension}

\textcolor{black}{To verify the applicability of the proposed waveform design framework to nonconstant-modulus data symbols, we further consider the RCS-only AF shaping and PAPR minimization designs with 16QAM. We set $|\mathcal{R}|/N=0.5$, yielding $R_{\mathrm{eff}}=2$. The remaining waveform parameters follow Table~\ref{tab:sim_parameters}. Table~\ref{tab:qam_extension} reports the results averaged over 100 independent data realizations, where the weighted ISL is normalized with respect to the corresponding initial waveform. The results show weighted-ISL and PAPR reductions comparable to those obtained with PSK. Another example in Appendix~\ref{app:extended_delay} also employs 16QAM data symbols.}

\bibliographystyle{IEEEtran}
\bibliography{BibOne.bib}

\end{document}